\newtheorem{theorem}{Theorem}
\theoremstyle{plain}
\newtheorem{example}[theorem]{Example}
\newtheorem{lemma}[theorem]{Lemma}
\newtheorem{proposition}[theorem]{Proposition}
\newtheorem{remarks}[theorem]{Remarks}
\newcommand{\R}{\ensuremath{\mathbb{R}}}
\newcommand{\E}{\ensuremath{\mathbb{E}}}
\def\e{{\mathrm{e}}}
\newcommand*{\ud}{\mathrm{d}}
\begin{document}
\title{Pricing of commodity derivatives on processes with memory}
\author{Fred Espen Benth$^{1}$}
\author{Asma Khedher$^{2}$}
\author{Mich\`ele Vanmaele$^{\;\,3}$}
\affil{\small $^{1}$Department of Mathematics, University of Oslo, P.O. Box 1053, Blindern, N-0316 Oslo, Norway\\$^{2}$Korteweg-de Vries Institute for Mathematics, P.O. Box 94248, 1090 GE Amsterdam, The Netherlands\\
$^{3}$Department of Applied Mathematics, Computer Science and Statistics, Ghent University, Krijgslaan 281 S9, B-9000 Gent, Belgium}


\maketitle
\begin{abstract}
Spot option prices, forwards and options on forwards relevant for the commodity markets are computed when the underlying process $S$ is modelled as an exponential of a process $\xi$ with memory as e.g.\ a L\'evy semi-stationary process. Moreover a risk premium $\rho$ representing storage costs, illiquidity, convenience yield or insurance costs is explicitly modelled as an Ornstein-Uhlenbeck type of dynamics with a mean level that depends on the same memory term as the commodity. Also the interest rate is assumed to be stochastic. To show the existence of an equivalent pricing measure $\mathbb{Q}$ for $S$ we relate the stochastic differential equation for $\xi$ to the generalised Langevin equation.
When the interest rate is deterministic  the process $(\xi; \rho)$ has an affine structure under the pricing measure $\mathbb{Q}$ and  an explicit expression for the option price is derived in terms of the Fourier transform of the payoff function. 
\end{abstract}

\noindent
{\bf Keywords:} Equivalent  measures, derivatives pricing, commodity markets, Langevin equation, affine processes, Fourier transform

\section{Introduction}

In financial markets the arbitrage-free price of a derivative is derived by a risk neutral probability. In complete markets, the risk-neutral probability
is unique, leading to a single arbitrage-free price dynamics. Most financial markets are, however, incomplete, with commodity markets as a
typical case. For example, the power spot market is only accessible for physical players that can produce or transmit electricity, whereas the 
forwards market on power is financial. See \cite{BenthKoe,Eydel,Geman} for a discussion on pricing in energy and commodity markets, and
\cite{Bjork} for a general treatment of the arbitrage pricing theory in financial markets. Power serves as the extreme example of an incomplete market, as the spot is considered not financially tradeable in addition to a price dynamics with highly non-Gaussian features such as price spikes. 

There is no unique risk-neutral probability in an incomplete market. In this paper, we focus on a class of probabilities $\mathbb Q$ that can be represented as a ``deviation'' from the risk-neutral one, in the sense that the price dynamics of the underlying asset will have a mean rate
of return which can be represented as the sum of a risk-free interest rate and an additional yield under this probability 
$\mathbb Q$. The probability $\mathbb Q$ will not be risk-neutral, but only equivalent to the market probability $\mathbb P$. It is referred
to as a pricing measure. With this class of probabilities, we model the risk premium by the additional yield. In commodity markets, this yield can be 
interpreted as storage costs, transportation, insurance, convenience yield and other illiquidity costs. Thus, the market price of risk is viewed as the 
compensation for financial risk and illiquidity risk.

Mean reversion and stationarity play an important role in the price dynamics of commodity prices (see \cite{Eydel,Geman}). We consider a 
spot price model where the logarithmic price dynamics follows a generalised Langevin equation. In our framework, we allow for
dependency on the past in the current price, as well as jumps. The dependency on the past in the dynamics comes in as
a memory term in the drift of the Langevin equation, being modelled as a weighted average of the historical logarithmic prices. 
Our model includes the class of L\'evy semi-stationary processes and 
continuous-time autoregressive moving average processes, popular modelling tools for power, gas and oil prices 
(see \cite{BBV3,BKMV,PP}) and weather variables like temperature and wind (see \cite{Benth}), as well
as volatility and turbulence (see \cite{BS}). We prove existence and uniqueness of a solution of our proposed general Langevin equation model.   

Our main result is that the class of pricing measures that we propose are indeed probabilities. This entails in proving that the density process in the Girsanov theorem is a true martingale. We appeal to the criteria in the extended Bene\v{s} method, developed in \cite{Klebaner}.    
In our analysis, we allow for stochastic interest rates and a stochastic dynamics for the yield in the risk premium. Both processes are modelled by jump-diffusions of Ornstein-Uhlenbeck type, with an explicit dependency on the memory part of the Langevin dynamics in the price 
dynamics. 

We perform an in-depth study of pricing of options and forwards using our pricing measure in the special cases when the logarithmic spot price
dynamics is of L\'evy semi-stationary type or a continuous-time autoregressive moving average process. Due to the affine structure, we obtain reasonably explicit expressions for call and put option prices in the former case using Fourier methods. For L\'evy semi-stationary processes, the forward price is also available explicitly as a function of the spot and the risk premium. Furthermore, we express the price dynamics for put and call options on forwards for this model class. Plain vanilla European options are typically traded on forwards in many power markets.
Further, we consider Wiener-driven continuous-time autoregressive moving average dynamics and extend the results in \cite{Benth} to introduce a class of pricing measures $\mathbb Q$. For these models, we derive the forward price dynamics under our pricing measure. 
We remark that the explicit price expressions in all cases are derived under the assumption of deterministic interest rates and hence future and forward prices coincide. 

Our analysis and results are presented as follows. Section 2 provides some motivation from commodity markets on pricing and risk-neutral probabilities, and presents the stochastic dynamics of the market that we will analyse in this paper. As background material, we also include some results on affine processes that will be needed later in the paper. The generalised Langevin equation modelling the logarithmic spot prices is analysed in Section 3, and in Section 4 we prove the validity of our proposed measure change.  Finally, Section 5 derives prices for various derivatives like options and forwards in the case of L\'evy semi-stationary processes. An extensive analysis for the case of continuous-time autoregressive moving average processes is also contained in this section.

%
\section{Set up and preliminaries}\label{sect1}


Suppose that $(\Omega,\mathcal{F},(\mathcal{F}_t )_{0\leq t\leq T},\mathbb{P})$ is a given probability space satisfying the usual conditions, see e.g.\ \cite{Protter}, and $W = (W(t))_{0\leq t\leq T}$ denotes an $(\mathcal{F}_t)$-Wiener process. Furthemore, we let 
$L = (L(t))_{0\leq t\leq T}$ be an $(\mathcal{F}_t)$-L\'evy process with characteristic triplet $(\varsigma, c^2, \ell)$.
Assume that the L\'evy measure of the process $L$ satisfies $\int_{|z|\geq 1} z^2 \,\ell(\ud z) < \infty$, e.g., $L$ is a
square-integrable L\'evy process. From the characteristic triplet of the process $L$, we know that the latter admits the following decomposition 
\begin{equation}\label{Levy}
L(t) = b t + cW (t) + \int\limits_0^t  \int\limits_\R z  \tilde{N} (\ud s, \ud z), \qquad  t\in [0,T], 
\end{equation}
where $b= \varsigma+ \int_{|z|\geq 1} z \, \ell(\ud z)$ and $\tilde{N}$ is a compensated  Poisson random measure. That is $\tilde{N}(\ud t, \ud z) = N(\ud t,\ud z) - \ell(\ud z)  \ud t$  and $N(\ud t, \ud z)$
is the Poisson random measure such that $\E[N(\ud t,\ud z)] = \ell(\ud z)  \ud t$.

Before defining our stochastic model for the commodity spot market (see Subsection~\ref{subsect:model}), we discuss 
some classical findings of forward pricing. 

\subsection{Spot and forwards in commodity markets}
Let $S$ be a stochastic process defining the spot price dynamics of a commodity given by a geometric Brownian motion under $\mathbb{P}$
\begin{equation}\label{GBM}
\ud S(t)=\mu S(t) \,\ud t+\sigma S(t) \,\ud W(t),
\end{equation}
where $\mu,\sigma>0$ are constants. 
If the spot can be liquidly traded in the commodity market, then we can perfectly hedge a short position in a forward contract by a long position in the spot financed by borrowing at the risk-free rate $r$. This hedging strategy is known as the buy-and-hold strategy and uniquely defines the forward prices (see e.g.~\cite{Eydel,Geman}). I.e., if $F(t,T)$ is the forward price at time $t\geq 0$ of a contract delivering the commodity at time
$T\geq t$, then $F(t,T)=S(t)\exp(r(T-t))$.  

When running a buy-and-hold strategy in a commodity market, the commodity must be stored. Thus the hedger will be incurred additional costs reflected in the forward price as an increased interest rate to be paid.  On the other hand, holding the commodity has a certain advantage over being long a forward contract due to the greater flexibility. The notion of convenience yield is introduced to explain this additional benefit accrued to the owner of the physical commodity (see \cite{Eydel,Geman} for more details on convenience yield). Denoting $\rho$ the yield from storage and convenience, one derives a forward price $F(t,T)=\exp((r+\rho)(T-t))$ for $0\leq t\leq T$.  
In contrast to a classical commodity market,
agriculture say, where the insurer can hedge her risk to some extent, the spot power market on the other hand is completely unhedgeable. Speculators cannot hedge in the power spot, where only physical players can take part. In this respect, a forward on power can be seen as a pure insurance instrument, where a speculator can offer insurance to a producer. To describe the added premium that the insurer charges to take on the risk one can also use an increased interest rate $r+\rho$ as above. We may refer to $\rho$ as the {\it risk premium}. 

We will introduce a pricing measure $\mathbb Q$ in terms of an explicit risk premium $\rho$ that should explain storage costs, illiquidity, and convenience yield.
Let us assume that this $\rho$ and an interest rate $r$ are constant rates.
Moreover, consider the measure change 
$$
\frac{\ud \mathbb{Q}}{{\ud \mathbb{P}}}\Big|_{\mathcal{F}_t}= \exp\Big(-\int\limits_0^t \theta\, \ud W(s) -\frac{1}{2}\int\limits_0^t \theta^2\, \ud s\Big)\,,\qquad 0\leq t\leq T\,.
$$
where $$\theta= \frac{\mu-r-\rho}{\sigma}\,. $$
Under this new probability measure $\mathbb{Q}$ it follows from Girsanov's theorem that the rate of return of $S$ equals $r+\rho$ and its 
dynamics is given by
\begin{equation}\label{eq2}
\ud S(t)= (r+\rho) S(t) \,\ud t+\sigma S(t)\, \ud W_{\mathbb{Q}}(t),\qquad 0\leq t\leq T\,,
\end{equation}
where $W_{\mathbb{Q}}$ is a Wiener process under $\mathbb{Q}$, with 
$$\ud W_{\mathbb{Q}}(t) =\ud W(t) +\frac{\mu-r-\rho}{\sigma}\, \ud t\,,\qquad 0\leq t\leq T\,.$$ 
This implies that the discounted spot process of $S$ is given under $\mathbb{Q}$ by
\[
\ud(\mbox{e}^{-rt}S(t))= \mbox{e}^{-rt}\sigma S(t) \,\ud W_{\mathbb{Q}}(t) + \mbox{e}^{-rt}S(t)\rho\, \ud t\,.
\]
Note that 
the rate of return in the $\mathbb{P}$-dynamics of the logreturns of $\tilde{S}(t) := \mbox{e}^{-rt}S(t)$ equals $\mu-r-\sigma^2/2$, for $t\in [0,T]$. 
Defining $\xi(t) :=\log S(t)$, then 
\begin{equation}\label{eq9}
\ud \xi(t) = \Big(\mu-\frac{\sigma^2}{2}\Big)  \ud t +\sigma\, \ud W(t).
\end{equation}

The forward price $F(t,T)$ contracted at $t$ with time of delivery $T\geq t$ is defined such that (see e.g. \cite{BenthKoe})
$$\E^{\mathbb{Q}}\left[\mbox{e}^{-r(T-t)} \Big(F(t,T) -S(T)\Big) \mid \mathcal{F}_t\right] =0\,. $$
When the dynamics of $S$ is given by \eqref{eq2} the latter is equivalent to 
\begin{align*}
F(t,T) &= \E^{\mathbb{Q}}\left[S(T) | \mathcal{F}_t\right] =\mbox{e}^{(r+\rho)(T-t)} S(t)\,.
\end{align*}
From the latter we see that the standard forward pricing theory is using a \textit{market price of risk} $(\mu-r-\rho)/\sigma$, where $(\mu-r)/\sigma$ is the risk-neutral change and 
$\rho/\sigma$ is added due to storage costs and convenience yield, or an insurance premium for illiquidity.
In this paper we will introduce such pricing measures for much more general models for the spot than a simple geometric Brownian motion.


\subsection{A commodity spot market model with memory and jumps}
\label{subsect:model}



We introduce our spot price dynamics $(S(t))_{0\leq t\leq T}$ in a commodity market as follows. Let 
\begin{equation}
\label{spot-model}
\xi(t):=\log S(t)\,,
\end{equation}
with $(\xi(t))_ {0\leq t\leq T}$ being a generalised Langevin equation of the form
\begin{equation}\label{eq10}
\ud\xi(t)=\Big(\int\limits_{0}^tM(t-u)\xi(u) \ud u\Big) \ud t+ \chi(t-) \,\ud L(t),
\end{equation}
where $M$ is a deterministic function and $\chi$ is a strictly positive $\mathbb{F}$-adapted c\`adl\`ag process. 
The notation $\chi(t-)$ means $\lim_{s\uparrow t}\chi(s)$, i.e., the left-limit of the process. 

Furthermore, we consider a stochastic interest rate $r$ and an explicit risk premium process  $\rho$ given by a bivariate Ornstein-Uhlenbeck (OU) type of dynamics with a mean level that depends on $\int_0^t M(t-u) \xi(u) \, \ud u$. That is 
\begin{align}
\ud r(t) &= \left[A(t) - B_2(t)r(t) \right] \ud t + B_1(t) \chi(t) c \, \ud W(t) + B_1(t) \chi(t-) \int\limits_\R z\tilde{N}(\ud t , \ud z)\,,\label{r-jumps}\\
\ud \rho(t) &= \Big[\bar{A}(t)+ \bar{B}_1(t)\int_0^t\limits M(t-u) \xi(u) \, \ud u - \bar{B}_2(t)r(t) - \bar{B}_3(t)\rho(t)  \Big] \ud t + \bar{B}_1(t) \chi(t) c \, \ud W(t)\nonumber\\
&\qquad   + \bar{B}_1(t) \chi(t-) \int\limits_\R z\tilde{N}(\ud t , \ud z)\,,  \label{rho-jumps}\\ 
r(0) &= r_0 \in \R\,, \quad \rho(0) = \rho_0 \in \R\,, \nonumber
\end{align} 
where $B_i$, $\bar{B}_i$, for $i\in \{1,2,3\}$ and $A$, $\bar{A}$ are deterministic functions uniformly bounded in $t$ by a constant.
This type of dynamics \eqref{eq10}-\eqref{rho-jumps} involving a memory is relevant for many commodity markets including power (see e.g.~\cite{BBV1} and \cite{BBV2}). The interest rate is driven by the same stochastic factors $W$ and $\tilde{N}$ as $\xi$ and $\rho$. This model is more general than a deterministic model and this particular choice is motivated by the aim to prove the existence of a measure change
(see Section~\ref{change}). In the next section we will state conditions ensuring the existence and uniqueness of a solution to the generalised Langevin equation \eqref{eq10}.    

\subsection{Affine processes}
As affine processes will play an important role in our considerations, we include in this section with preliminaries some useful results on this
class of processes.

Affine processes are continuous-time Markov processes characterised by the fact that their log-characteristic function depends in an affine way on the initial state vector of the process.
Recently affine models have gained significant attention in the finance literature mainly due to 
their analytical tractability (see for example \cite{DDW}, \cite{DPS}, and \cite{KM}). 
In the sequel we introduce an equivalent martingale measure $\mathbb{Q}$ under which the process $(\xi, r,\rho)$ introduced in \eqref{eq10}-\eqref{rho-jumps} is going to be of time-inhomogeneous affine type. Since we are interested in pricing contingent claims written on a process $\exp(\xi)$, we recall in this subsection a result showing that pricing contingent claims in time-inhomogeneous affine models can be reduced to the solution of a set of Riccati-type ordinary differential equations. 

Denote by $\cdot^{\top}$ the transpose of a given vector or a matrix. Assume there exists a unique solution to the following stochastic differential equation (SDE)
\begin{align*}
\ud X(t) &= \varpi(t,X(t)) \, \ud t + \sigma(t, X(t)) \, \ud \tilde{W}(t) + \int\limits_\R \iota(t,z) \left(\mu^X(\ud t, \ud z)-\nu(\ud t, \ud z)\right)\,,\\
X(0) &= x \in \R^d\,,
\end{align*}
where $\tilde{W}$ is a $d$-dimensional Brownian motion, for $d \geq 1$, $\mu^X$ is a random measure of the jumps of $X$, $\nu(\ud t, \ud z) = \ell_t(\ud z)\ud t$ is the compensator of the jump measure $\mu^X$ which we assume to be deterministic,
$\varpi: [0,T] \times \R^d \rightarrow \R^d$, $\sigma: [0,T] \times \R^d \rightarrow \R^{d \times d}$ is continuous and such that 
$\varrho(t,x) = \sigma(t,x) \sigma(t,x)^{\top}$ is continuous for $t \in [0,T]$ and $x\in \R^d$. Moreover, $\iota(t, z) \in \R^d$ is continuous in $t\in [0,T]$ for $z\in \R$ and such that 
$\int_\R \iota^2(t,z)\ell_t(\ud z) <\infty$, for all $t \in [0,T]$.

Moreover, assume the following affine structure for the time-dependent parameters of the SDE $X$ 
\begin{align*}
\varrho(t,x) &= \varrho(t) + \sum_{i=1}^d x_i \alpha_i(t)\,,\\
\varpi(t,x) &= \varpi(t) + \sum_{i=1}^d x_i \beta_i(t)\,,
\end{align*}
where $\varrho(t)$ and $\alpha_i(t)$ are $d\times d$ matrices and $\varpi(t)$ and $\beta_i(t)$ are $d$-vectors.  
We consider a real-valued process $R$ for which we impose the following affine structure 
$$R(t)  = c + \gamma^{\top} X(t)\,,$$
for $c \in \R$ and $\gamma \in \R^d$. 

Let $u \in \mathbb{C}^d$ and $(\phi(\cdot, T, u), \psi(\cdot, T, u)): [0,T] \rightarrow \mathbb{C}\times \mathbb{C}^d$ be $C^1$-functions satisfying the following Riccati equations
\begin{align}\label{Riccati-equations}
\partial_t \phi (t,T,u) &= -\psi(t,T,u)^{\top} \varpi(t) -\frac{1}{2} \psi(t,T,u)^{\top}\varrho(t) \psi(t,T,u)\nonumber\\
&\qquad  - \int\limits_\R[\mbox{e}^{\psi(t,T,u)^{\top} \iota(t,z)} - 1- \psi(t,T,u)^{\top} \iota(t,z)]\,\ell_t(\ud z) + c\,,\nonumber\\
\partial_t \psi_i(t, T, u) &=   - \beta_i(t)^{\top}\psi(t,T,u) -\frac{1}{2} \psi(t,T,u)^{\top}\alpha_i(t) \psi(t,T,u) + \gamma_i\,, \qquad 1\leq i \leq d, \nonumber\\
\phi(T,T,u) &= 0\,,\nonumber\\
\psi(T,T,u) &= u\,.
\end{align}
We compute in the following theorem the discounted moment generating function of $X(T)$, conditional on the information at time $t\leq T$ in terms of the solution to the Riccati equations \eqref{Riccati-equations}. For a proof we refer to \cite[Theorem 2.13]{DF}. See also \cite[Theorem 5.1]{KMK}.
\begin{theorem}\label{Affine-process-charac}
Let $u \in \mathbb{C}^d$. Suppose that \eqref{Riccati-equations} admits a unique solution $(\phi(\cdot, T, u), \psi(\cdot, T, u)): [0,T] \rightarrow \mathbb{C}\times \mathbb{C}^d$. Then 
\begin{equation*}
\E[\e^{-\int_t^T R(s) \, \ud s} \e^{u^{\top} X(T)} | \mathcal{F}_t]  = \e^{\phi(t,T,u) +\psi(t,T,u)^{\top} X(t)}, \qquad t\leq T\,.
\end{equation*}
\end{theorem}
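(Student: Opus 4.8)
The plan is to realise the claimed conditional expectation as the value at time $t$ of a martingale, the martingale property being forced precisely by the Riccati system \eqref{Riccati-equations}. Fix $u\in\mathbb{C}^d$ and assume, as in the statement, that $(\phi(\cdot,T,u),\psi(\cdot,T,u))$ is a $C^1$ solution of \eqref{Riccati-equations}. Introduce the (in general complex-valued) process
\[
Y(t):=\exp\Big(-\int\limits_0^t R(s)\,\ud s+\phi(t,T,u)+\psi(t,T,u)^{\top}X(t)\Big),\qquad 0\le t\le T,
\]
and set $f(t,x):=\exp\big(\phi(t,T,u)+\psi(t,T,u)^{\top}x\big)$, so that $Y(t)=\e^{-\int_0^tR(s)\,\ud s}f(t,X(t))$. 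The first step is to apply the It\^o formula for semimartingales with jumps to $f(t,X(t))$, using the dynamics of $X$, and then the integration-by-parts rule with the continuous finite-variation factor $\e^{-\int_0^tR(s)\,\ud s}$, whose differential is $-R(t)\e^{-\int_0^tR(s)\,\ud s}\,\ud t$.

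Carrying this out, $\ud Y(t)$ splits into a local martingale part --- an $\ud\tilde W$-integral with integrand $Y(t-)\psi(t,T,u)^{\top}\sigma(t,X(t))$ and a compensated-Poisson integral with integrand $Y(t-)\big(\e^{\psi(t,T,u)^{\top}\iota(t,z)}-1\big)$ --- plus a drift part equal to $Y(t)$ times
\begin{multline*}
-R(t)+\partial_t\phi(t,T,u)+\big(\partial_t\psi(t,T,u)\big)^{\top}X(t)+\psi(t,T,u)^{\top}\varpi(t,X(t))\\
+\frac{1}{2}\psi(t,T,u)^{\top}\varrho(t,X(t))\psi(t,T,u)\\
+\int\limits_\R\big(\e^{\psi(t,T,u)^{\top}\iota(t,z)}-1-\psi(t,T,u)^{\top}\iota(t,z)\big)\,\ell_t(\ud z).
\end{multline*}
Now insert the affine specifications $\varpi(t,x)=\varpi(t)+\sum_{i=1}^d x_i\beta_i(t)$, $\varrho(t,x)=\varrho(t)+\sum_{i=1}^d x_i\alpha_i(t)$ and $R(t)=c+\gamma^{\top}X(t)$, and separate the terms that do not involve $X(t)$ from the coefficient of each component $X_i(t)$. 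The $X$-free part vanishes by the $\phi$-equation in \eqref{Riccati-equations}, and the coefficient of $X_i(t)$ vanishes by the $\psi_i$-equation; hence the drift of $Y$ is identically zero, so $Y$ is a local martingale on $[0,T]$.

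The remaining step, which I expect to be the main obstacle, is to upgrade $Y$ from a local martingale to a true martingale. Since $u$, and hence $Y$, may be complex-valued, one cannot simply appeal to the fact that a nonnegative local martingale is a supermartingale; instead one picks a localising sequence of stopping times $(\tau_n)$ for which each $Y^{\tau_n}$ is a genuine martingale and then passes to the limit $n\to\infty$ by dominated convergence. Because $c\in\R$, $\gamma\in\R^d$ and $X$ is $\R^d$-valued, the exponent of $Y(t)$ has real part $-\int_0^tR(s)\,\ud s+\mathrm{Re}\,\phi(t,T,u)+\big(\mathrm{Re}\,\psi(t,T,u)\big)^{\top}X(t)$, so $|Y(t)|$ is controlled through the continuity (hence boundedness on $[0,T]$) of $\phi(\cdot,T,u)$ and $\psi(\cdot,T,u)$ together with exponential-integrability properties of $X$; checking uniform integrability of $\{Y(t\wedge\tau_n)\}_n$ under the hypotheses in force, and thereby the martingale property, is precisely what is carried out in \cite[Theorem 2.13]{DF} (see also \cite[Theorem 5.1]{KMK}), to which one refers for the technical details. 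Once $Y$ is a true martingale, $\E[Y(T)\mid\mathcal{F}_t]=Y(t)$; dividing both sides by the $\mathcal{F}_t$-measurable positive factor $\e^{-\int_0^tR(s)\,\ud s}$ and using the terminal conditions $\phi(T,T,u)=0$ and $\psi(T,T,u)=u$ gives
\[
\E\big[\e^{-\int_t^TR(s)\,\ud s}\e^{u^{\top}X(T)}\,\big|\,\mathcal{F}_t\big]=\e^{\phi(t,T,u)+\psi(t,T,u)^{\top}X(t)},\qquad t\le T,
\]
which is the assertion.
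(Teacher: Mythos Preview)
Your argument is correct and is exactly the standard martingale-verification proof: define $Y(t)=\exp(-\int_0^tR(s)\,\ud s+\phi(t,T,u)+\psi(t,T,u)^{\top}X(t))$, apply It\^o's formula with jumps, observe that the affine structure together with the Riccati system \eqref{Riccati-equations} kills the drift, and then pass from local martingale to true martingale via localisation and dominated convergence. The paper does not actually give its own proof of this theorem; it simply refers to \cite[Theorem 2.13]{DF} and \cite[Theorem 5.1]{KMK}, which carry out precisely the computation you sketched, including the uniform-integrability verification you flagged as the main technical point. So your proposal coincides with the approach in the cited references, and there is nothing further to compare.
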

This result allows for the use of Fourier transform techniques to compute derivative prices written on affine models, as we will return to in Section~\ref{section5}.

\section{Analysis of the generalised Langevin equation}\label{Langevin}
In this section we start from a generalised Langevin equation and its solution found by Laplace transformation to propose a solution to the corresponding SDE introduced in \eqref{eq10}. 

Consider the following generalised Langevin equation as in \cite{Fox} 
\begin{equation}\label{eq01}
\dot{\eta}(t)=\int\limits_0^tM(t-u)\eta(u) \ud u +\sigma w(t),
\end{equation}
where $M$ stands for the memory kernel, $w$ represents Gaussian fluctuations, $\sigma$ is a constant
and $\dot{\eta}$ denotes the time-derivative of $\eta$. When $M$ is the Dirac delta function the solution will be Markovian. 
Using Laplace transforms and the definition
\begin{equation}\label{LaplaceM}
\hat{M}(z)=\int\limits_0^{+\infty}\mbox{e}^{-zt}M(t)\ud t,
\end{equation}
for $z \in \mathbb{C}$, for which the integral makes sense, the solution to \eqref{eq01} can be expressed as (see \cite{Fox}),
\begin{equation}\label{eq02}
\eta(t)=H(t)\eta(0)+\sigma \int\limits_0^{t}H(t-u)w(u) \ud u,
\end{equation}
where $H$ is defined through its Laplace transform
\begin{equation}\label{eq02i}
\hat{H}(z)=\frac{H(0)}{z-\hat{M}(z)},
\end{equation}
with, for simplicity, $H(0)=1$. From \eqref{eq02i} it follows that
\begin{equation}\label{eq02ii}
\dot{H}(t)=\int\limits_0^tM(t-u)H(u) \ud u.
\end{equation}
We include here an example of a memory kernel $M$ that defines a corresponding function $H$ although it is singular at zero.
\begin{example}\label{exampleM}
Let $M(s)=s^{-\alpha}$ for $0<\alpha<1/2$. Then we should find a function $H$ satisfying \eqref{eq02ii}. This means, after integrating both sides and invoking Fubini,
\begin{equation}\label{H-int}
\begin{aligned}
H(t)-1=\int_0^t\dot{H}(s)\ud s &=\int_0^t\int_0^s(s-u)^{-\alpha}H(u)\ud u\ud s \\
&=\int_0^t\int_u^t(s-u)^{-\alpha}\ud s H(u)\ud u \\
&=\frac1{1-\alpha}\int_0^t(t-u)^{1-\alpha}H(u)\ud u.
\end{aligned}
\end{equation}
To solve the Volterra integral equation, we propose that 
\begin{equation} \label{def-H-series}
H(t)=\sum_{n=0}^{\infty}b_n(\alpha)(t^{2-\alpha})^n.
\end{equation}
Immediately, we find $b_0(\alpha)=1$. Inserting \eqref{def-H-series} into the integral equation \eqref{H-int}, we get
$$
1+\frac1{1-\alpha}\sum_{n=0}^{\infty}b_n(\alpha)\int_0^t(t-s)^{1-\alpha}s^{2n-n\alpha}\ud s
=\sum_{n=0}b_n(\alpha)(t^{2-\alpha})^n .
$$
But the integral on the left hand side is connected to the Beta-distribution:
\[
\int_0^t(t-s)^{1-\alpha}s^{\beta}\ud s=t^{2+\beta-\alpha}\int_0^1(1-u)^{1-\alpha}u^{\beta}\ud u =t^{2+\beta-\alpha}\frac{\Gamma(2-\alpha)\Gamma(1+\beta)}{\Gamma(3+\beta-\alpha)}.
\]
With $\beta=2n-n\alpha$ we find the recursive relations
$$
b_n(\alpha)=b_{n-1}(\alpha)\frac{\Gamma(2-\alpha)\Gamma(1+(n-1)(2-\alpha))}{(1-\alpha)\Gamma(1+n(2-\alpha))}.
$$
Using that $\Gamma(1+1-\alpha)=(1-\alpha)\Gamma(1-\alpha)$, we reach that
$b_0(\alpha)=1$ and
\begin{equation}
\label{def-rec-b}
b_n(\alpha)=b_{n-1}(\alpha)\frac{\Gamma\Bigl(1+(n-1)(2-\alpha)\Bigr)}{\Gamma\Bigl(1+n(2-\alpha)\Bigr)}\Gamma(1-\alpha), n=1,2,3,\ldots .
\end{equation}
Now, we may ask whether the representation \eqref{def-H-series} of $H(t)$ is a convergent series. By the ratio test we find
$$
\frac{b_n(\alpha)(t^{2-\alpha})^n}{b_{n-1}(\alpha))(t^{2-\alpha})^{n-1}}=\Gamma(1-\alpha)t^{2-\alpha}
\frac{\Gamma\Bigl(1+(n-1)(2-\alpha)\Bigr)}{\Gamma\Bigl(1+n(2-\alpha)\Bigr)}.
$$ 
By Stirling's formula, we have an approximation of the Gamma-function for large values being
$$
\Gamma(1+x)\sim k\sqrt{x}(x/\e )^x
$$ 
for some positive constant $k$. But then we have
\begin{align*}
\frac{\Gamma\Bigl(1+(n-1)(2-\alpha)\Bigr)}{\Gamma\Bigl(1+n(2-\alpha)\Bigr)}&\sim
\sqrt{\frac{n-1}{n}}\frac{\Bigl(\frac{(n-1)(2-\alpha)}{\e}\Bigr)^{(n-1)(2-\alpha)}}{\Bigl(\frac{n(2-\alpha)}{\e}\Bigr)^{n(2-\alpha)}} \\
&=\sqrt{\frac{n-1}{n}}\Bigl(\bigl(1-\frac1n \bigr)^n\frac1{n-1}\frac{\e}{2-\alpha}\Bigr)^{2-\alpha}\rightarrow 0
\end{align*}
when $n\rightarrow\infty$, 
since $(n-1)/n\rightarrow 1$, $(1-1/n)^n\rightarrow \e^{-1}$ and $0<\alpha<1/2$. Hence, 
$H$ is convergent for all $t<\infty$. 
\end{example}

Motivated by these considerations, we state the following claim.
\begin{proposition} \label{existence1}
Consider the generalised Langevin equation \eqref{eq10}, which we recall to be
\begin{equation*}
\ud \xi(t)  = \int\limits_0^t M(t-u)\xi(u) \, \ud u \,\ud t + \chi(t-)  \,\ud L(t),
\end{equation*}
where $\chi$ is a strictly positive $\mathbb{F}$-adapted, c\`adl\`ag process and satisfying $\E[\int_0^T\chi^2(s)\ud s] <\infty$. 
Assume there exists a unique solution $H$ to \eqref{eq02ii}. Further, let $M$ and $H$ be such that 
\begin{equation}\label{Fubini-condition}
\int\limits_0^T \E\Big[\int\limits_0^T M^2(t-u) H^2(u-s) \chi^2(s) \,\ud u \Big]\ud s< \infty  \qquad \forall t\in [0,T].
\end{equation}
Then the analogue of the solution \eqref{eq02}, namely
\begin{equation}\label{eq04}
\xi(t) = H(t) \xi(0) + \int\limits_0^t \chi(u-) H(t-u) \,\ud L(u)
\end{equation}
is the unique solution to the generalised Langevin equation \eqref{eq10}.
\end{proposition}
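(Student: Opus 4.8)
The plan is to handle existence and uniqueness separately. For existence I would simply substitute the candidate \eqref{eq04} into the integrated form of \eqref{eq10} and verify it is a fixed point; the only non‑routine ingredient is the stochastic Fubini theorem for semimartingale integrators (see e.g.\ \cite{Protter}), whose applicability is guaranteed precisely by the standing hypotheses $\E[\int_0^T\chi^2(s)\,\ud s]<\infty$ and \eqref{Fubini-condition}. For uniqueness I would subtract two solutions and note that the difference solves a homogeneous pathwise Volterra equation, which forces it to vanish by Gronwall's lemma.

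For existence, write $\xi$ for the process in \eqref{eq04}. Since $H(0)=1$, identity \eqref{eq02ii} gives $H(t-u)=1+\int_u^t\dot H(s-u)\,\ud s$ and, substituting $v\mapsto w-u$ in $\dot H(s-u)=\int_0^{s-u}M(s-u-v)H(v)\,\ud v$, also $\dot H(s-u)=\int_u^s M(s-w)H(w-u)\,\ud w$. Plugging the first identity into \eqref{eq04} and applying stochastic Fubini,
\begin{equation*}
\int_0^t\chi(u-)H(t-u)\,\ud L(u)=\int_0^t\chi(u-)\,\ud L(u)+\int_0^t\Big(\int_0^s\chi(u-)\dot H(s-u)\,\ud L(u)\Big)\ud s .
\end{equation*}
Inserting the second identity for $\dot H(s-u)$ and applying stochastic Fubini once more, the inner stochastic integral becomes $\int_0^s M(s-w)\big(\int_0^w\chi(u-)H(w-u)\,\ud L(u)\big)\ud w=\int_0^s M(s-w)\big(\xi(w)-H(w)\xi(0)\big)\,\ud w$, where \eqref{eq04} is recognised again. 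Using $\int_0^s M(s-w)H(w)\,\ud w=\dot H(s)$ from \eqref{eq02ii} and $\int_0^t\dot H(s)\,\ud s=H(t)-1$, the term $\xi(0)\int_0^t\dot H(s)\,\ud s$ cancels the contribution $(H(t)-1)\xi(0)=H(t)\xi(0)-\xi(0)$, leaving exactly $\xi(t)-\xi(0)=\int_0^t\big(\int_0^s M(s-w)\xi(w)\,\ud w\big)\ud s+\int_0^t\chi(s-)\,\ud L(s)$, i.e.\ the integrated version of \eqref{eq10}. Beforehand I would record that \eqref{eq04} is a well‑defined $L^2$‑process, which follows from $\E[\int_0^T\chi^2(s)\,\ud s]<\infty$ together with the continuity and boundedness of $H$ on $[0,T]$ (a consequence of the assumed solvability of \eqref{eq02ii}, as illustrated in Example~\ref{exampleM}).

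For uniqueness, let $\xi_1,\xi_2$ be two solutions with the same initial value and put $\zeta=\xi_1-\xi_2$. The stochastic integral term and the initial condition drop out, so $\zeta(t)=\int_0^t\big(\int_0^s M(s-u)\zeta(u)\,\ud u\big)\ud s$ for every $t$, pathwise. With $m(s):=\int_0^s|M(v)|\,\ud v$ (finite, since \eqref{Fubini-condition} forces $M$ to be square‑, hence locally‑, integrable near the origin) and $g(t):=\sup_{0\le s\le t}|\zeta(s)|$, this gives $g(t)\le\int_0^t m(s)g(s)\,\ud s$, whence $g\equiv0$ by Gronwall's lemma and $\xi_1=\xi_2$.

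The main obstacle is the rigorous justification of the two applications of the stochastic Fubini theorem: one must check the required $L^2$‑integrability of the iterated integrands against the drift, the Brownian and the compensated‑jump parts of $L$, and it is here that hypothesis \eqref{Fubini-condition} is consumed — after applying the Cauchy--Schwarz inequality to $\dot H(s-u)=\int_u^s M(s-w)H(w-u)\,\ud w$, it bounds the relevant second moment by the finite quantity appearing in \eqref{Fubini-condition}. A secondary point requiring care is that all the intermediate processes inherit enough integrability for the interchanges to be legitimate, which again reduces to \eqref{Fubini-condition} and $\E[\int_0^T\chi^2(s)\,\ud s]<\infty$.
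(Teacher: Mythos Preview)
Your existence argument is correct and is essentially the paper's argument written in integrated rather than differential form. The paper differentiates the candidate \eqref{eq04} to obtain the intermediate equation~\eqref{existence}, and then uses \eqref{eq02ii}, a change of variables and the stochastic Fubini theorem to transform \eqref{existence} into \eqref{eq10}; you instead expand $H(t-u)=1+\int_u^t\dot H(s-u)\,\ud s$, apply stochastic Fubini twice, and recognise \eqref{eq04} inside the resulting expression. These are the same computation run in opposite directions, and the integrability checks you sketch match the paper's Step~2.

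Your uniqueness argument genuinely differs. The paper notes that the right-hand side of the intermediate equation~\eqref{existence} depends only on $\xi(0)$, $H$, $\chi$ and $L$ (not on the path of $\xi$), so the solution to \eqref{existence} is unique by integration; uniqueness for \eqref{eq10} then follows from the equivalence of \eqref{existence} and \eqref{eq10}. You bypass the auxiliary equation entirely and run Gronwall directly on the homogeneous Volterra equation satisfied by the difference of two solutions. Your route is more self-contained for \eqref{eq10} and does not rely on establishing a two-way equivalence with \eqref{existence}; the price is that you need $\int_0^T|M(v)|\,\ud v<\infty$, which you extract from \eqref{Fubini-condition}. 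That extraction is the one place that deserves an extra line: you use that $H$ is continuous with $H(0)=1$ and $\chi$ is strictly positive to localise \eqref{Fubini-condition} near $u=s$ and conclude local square-integrability of $M$. With that clarification, your argument is complete.
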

\begin{proof} The proof consists of two steps. First we show that  $\xi(t)$ in \eqref{eq04} is indeed the unique solution to \eqref{eq10}. In a second step we check that all necessary integrability conditions are satisfied.
	\begin{description}
		\item{\textit{Step 1\ }} Computing the differential of $\xi(t)$ in \eqref{eq04}, we get
\begin{equation}
\ud\xi(t) =\xi(0)\,\ud H(t) +H(0)\chi(t-) \,\ud L(t) + \int\limits_0^t \frac{\ud}{\ud t}H(t-s) \chi(s-) \,\ud L(s) \,\ud t. \label{existence} 
\end{equation}
For two solutions $\xi_1(t)$ and $\xi_2(t)$ of \eqref{existence} with $\xi_1(0)=\xi_2(0)$, it immediately follows from \eqref{existence} that $\xi_1(t)=\xi_2(t)$ for any $t\in  [0,T]$. Thus \eqref{eq04} is the unique solution to \eqref{existence}.
Now, we insert \eqref{eq02ii} in \eqref{existence}, perform a change of variables by putting $u=s+\tau$, and apply the stochastic Fubini theorem, see e.g.\ \cite{Protter}, to arrive at
\begin{align*}
\ud\xi(t) & =  \int\limits_0^tM(t-u)H(u) \,\ud u\,\xi(0) \,\ud t+H(0)\chi(t-) \, \ud L(t) \nonumber\\
&\qquad \qquad + \int\limits_0^t \int\limits_0^{t-s}M(t-s-\tau)H(\tau)\,\ud \tau \chi(s-) \,\ud L(s) \,\ud t\nonumber\\\
& = \int\limits_0^tM(t-u)H(u)\,\ud u\, \xi(0) \,\ud t+H(0)\chi(t-) \,\ud L(t)\nonumber\\\
&\qquad \qquad  + \int\limits_0^t \int\limits_s^t M(t-u)H(u-s)\,\ud u\, \chi(s-) \,\ud L(s)\, \ud t\nonumber\\\
& = \Big[\int\limits_0^tM(t-u)H(u)\xi(0) \,\ud u +  \int\limits_0^t \int\limits_0^t M(t-u) H(u-s)1_{\{s\leq u\}} \,\ud u \,\chi(s-)\,\ud L(s)\Big]\ud t\nonumber\\\
&\qquad \qquad + H(0)\chi(t-)\, \ud L(t) \nonumber\\\
& =  \Big[\int\limits_0^tM(t-u)H(u)\xi(0)\, \ud u +  \int\limits_0^t M(t-u) \int\limits_0^{u} H(u-s)\chi(s-) \, \ud L(s) \,\ud u\Big]\ud t\nonumber\\\
&\qquad \qquad + H(0)\chi(t-) \,\ud L(t) \nonumber\\\
& =\int\limits_0^t M(t-u) \Big[H(u)\xi(0) + \int\limits_0^{u} H(u-s) \chi(s-) \,\ud L(s) \Big] \ud u \,\ud t +H(0) \chi(t-)  \,\ud L(t)\,.\nonumber\
\end{align*}
By observing that the factor between brackets is exactly $\xi(u)$ according to \eqref{eq04}, we conclude that
\eqref{existence} and \eqref{eq10} are equivalent. Since \eqref{eq04} is the unique solution to \eqref{existence} and $H$ satisfying \eqref{eq02ii} is unique, then \eqref{eq04} is also the unique solution to \eqref{eq10}.
\item{\textit{Step 2\ }}
The integrability condition \eqref{Fubini-condition} allows us to apply the stochastic Fubini theorem in Step 1 and implies that the integral terms in equations \eqref{eq04} and \eqref{existence} are well defined. Indeed, from
\begin{align*}
&\int\limits_0^T \E\Big[\int\limits_0^T M^2(t-u) H^2(u-s) \chi^2(s) \,\ud u\Big] \ud s \\
&\qquad = \int\limits_0^T M^2(t-u) \E\Big[\int\limits_0^TH^2(u-s) \chi^2(s) \,\ud s\Big] \ud t < \infty.
\end{align*}
we deduce that $\E\Big[\int_0^TH^2(u-s) \chi^2(s) \,\ud s\Big] < \infty$, guaranteeing in turn that the integral in \eqref{eq04} is well defined. Further, using \eqref{eq02ii} we derive from
\begin{align*}
&\E\Big[\int\limits_0^t \left(\frac{\ud }{\ud t}H(t-s)\right)^2 \chi^2(s) \, \ud s\Big] \\
&\qquad = \E \Big[ \int\limits_0^t \Big(\int\limits_0^{t-s} M(t-s-u)H(u) \,\ud u \Big)^2 \chi^2(s)  \,\ud s\Big]\\
&\qquad \leq T\E\Big[ \int\limits_0^T \int\limits_0^T M^2(t-u) H^2(u-s) \chi^2(s) \,\ud u \,\ud s\Big] < \infty,
\end{align*}
 that also the integral in \eqref{existence} is well defined.
\end{description}
\end{proof}

Let $g$ be a real-valued function on $[0,\infty )$ and $H$ be a differentiable function such that
\begin{align}\label{function-g}
g(t) &= \dot{H}(t),
\end{align}
with $H(0)$ a finite value and
\begin{equation}\label{condition-g}
	\mathbb{E}\Big[\int\limits_0^Tg^2(t-u)\chi^2(u) \,\ud u\Big] <\infty, \quad \forall t\in [0,T].
\end{equation}
Then starting from an SDE of type \eqref{existence}, that is considering 
\begin{align}\label{BSS}
\ud \xi(t) = \int\limits_0^tg(t-u)\chi(u-)  \,\ud L(u) \,\ud t +  g(t)\xi(0)\, \ud t + H(0) \chi(t-) \,\ud L(t),
\end{align}
where $\chi$ is an $\mathbb{F}$-adapted process, we know from the proof of Proposition \ref{existence1}, that the SDE \eqref{BSS} admits a unique solution given by \eqref{eq04}.
Furthermore, we know that if  there is a unique $M$ such that the relation \eqref{eq02ii} holds and $M$ and $H$ satisfy the integrability condition \eqref{Fubini-condition}, then the SDE \eqref{BSS} is equivalent to the SDE \eqref{eq10}. 

SDEs of the type \eqref{BSS} are common for modelling for example temperature and wind speed in energy markets,
see \cite{Benth}. In the next section, we will exploit the relation of these equations to the Langevin equation \eqref{eq10} 
to show the existence of equivalent martingale measures for such processes. To conclude this section we state the link between the Langevin equation \eqref{eq10} and Volterra equations, which is of importance for our analysis later.

\begin{remarks}\label{remarks3}
\begin{description}
\item {(A) } Notice that the SDE \eqref{eq10} under consideration belongs to the class of {\it Volterra equations} driven by a L\'evy process. Those are SDEs of the type
\begin{equation}\label{SDEgeneral}
\ud \xi(t)=a(t,\xi) \, \ud t+b(t-,\xi) \, \ud L(t),\quad \xi(0)=\xi_0,
\end{equation}
where $\xi_0$ is an $\mathcal{F}_0$-measurable random variable satisfying $\mathbb{P}(|\xi_0|<\infty)=1$ and $\xi=(\xi(t))_{0\leq t\leq T}$.
Volterra equations appear naturally in many areas of mathematics such as integral transforms, transport equations, and functional differential equations (we refer to~\cite{GLS} for an introduction and a general overview of these equations in the deterministic case). 
They also appear in applications in biology, physics, and finance. For an example in economics (which also applies to population dynamics), we refer to Example 3.4.1 in \cite{HOUZ}. In the framework of stochastic delay equations and optimal control theory, we refer to~\cite{B} and \cite{OZ}. These processes have recently been proposed in the framework of modelling electricity and commodity prices, see e.g.~\cite{BBV1} and \cite{BBV2}.

The existence and uniqueness of the solution to the SDE of type \eqref{SDEgeneral} is well studied (see e.g., \cite[Theorem 4.6]{Lipster} for Volterra equations driven by Brownian motion and \cite{P} for Volterra equations driven by semimartingales). In our analysis, we showed the existence and uniqueness of the solution by exploiting the link of the Langevin equation \eqref{eq10} to the SDE \eqref{existence}. 

\item{(B) } The results in Proposition \ref{existence1} hold true when $\chi$ is a $p$-dimensional vector process  with $p\geq 1$,  and 
$H: [0,T] \rightarrow \R^{p \times p}$  is a matrix-valued function.  Correspondingly, $M$ and $g$ as defined in \eqref{function-g} will also be matrix-valued functions, i.e., $M: [0,T] \rightarrow \R^{p \times p}$ and $g: [0,T] \rightarrow \R^{p \times p}$. In this case, the solution $\xi$ will be a $p$-dimensional process.
\end{description}
\end{remarks}

\section{Change of measure}\label{change}

Recall the spot market model introduced in Subsection~\ref{subsect:model}, where we assume the conditions of Proposition \ref{existence1} to the
generalised Langevin dynamics \eqref{eq10}.

As recalled in Section \ref{sect1}, we need the dynamics of $S$ under a pricing measure $\mathbb Q$ to price contingent claims on the 
commodity spot $S$. We study a particular class of pricing measures $\mathbb Q$, and 
start by applying the It\^{o} formula to $S=\exp(\xi)$ to derive the dynamics of $S$ under the market probability $\mathbb{P}$ from the dynamics 
of $\xi$ in \eqref{eq10}
\begin{equation}\label{price-process}
\begin{aligned}
	\ud S(t) 
	& =S(t) \left(\int\limits_0^tM(t-u)\log(S(u)) \,\ud u +\chi(t) b + \gamma(t)+ \frac{1}{2}\chi^2(t) c^2\right) \ud t \\
	&\qquad +S(t)\chi(t)  c\,\ud W(t)+S(t-)\int\limits_{\R} \left(\e^{\chi(t-)z} -1\right) \tilde{N}(\ud t, \ud z)\,,
\end{aligned}
\end{equation}
where $\gamma(t) = \int_\R \left(\e^{\chi(t)z} -1-\chi(t) z\right) \ell(\ud z)$. Recall the dynamics of $r$ in \eqref{r-jumps}. Then, the dynamics of the discounted price process $\tilde{S}(t) = \e^{-\int_0^t r(s)\,\ud s} S(t)$, $t\in [0,T]$, is given by
\begin{equation*}
\begin{aligned}
	\ud\tilde{S}(t) 
	 &=\tilde{S}(t) \left(\int\limits_0^t M(t-u)\xi(u) \,\ud u +\chi(t) b + \gamma(t)+ \frac{1}{2}\chi^2(t) c^2-r(t)\right) \ud t \\
	&\qquad  + \tilde{S}(t) \chi(t) c \, \ud W(t)  + \tilde{S}(t-) \int\limits_{\R} \left(\e^{\chi(t-)z} -1\right) \tilde{N}(\ud t, \ud z)\,.
\end{aligned}
\end{equation*}
We consider a pricing measure $\mathbb Q$ defined by $\ud \mathbb{Q} = Z(T)  \, \ud \mathbb{P}$ for a density process $(Z(t))_{0\leq t\leq T}$  (see, e.g., the Girsanov Theorem 1.31 in \cite{OKSU}) 
\begin{equation}\label{process-z-jump}
\begin{aligned}
Z(t) &= \exp\left\{-\int\limits_0^t  \varphi(s,S) \ud W(s) - \frac{1}{2} \int\limits_0^t \varphi^2(s,S) \ud s +\int\limits_0^t\int\limits_\R \log(1-\zeta(s-,z))  \tilde{N}(\ud s, \ud z) \right.\\
&\qquad + \left.\int\limits_0^t\int\limits_\R \left[\log(1-\zeta(s,z)) + \zeta(s,z)\right]\, \ell(\ud z)  \ud s \right\} , \qquad 0\leq t\leq T,
\end{aligned}
\end{equation}
where 
\begin{align} \label{phi}
 \varphi(t,S) &=  \frac{1}{\chi(t) c}\left(\int\limits_0^t M(t-u)\log(S(u))  \,\ud u +\chi(t) b+ \frac{1}{2}\chi^2(t) c^2-r(t)-\rho(t)\right),\\
 \label{zeta}
\zeta(t,z) &=  \frac{\e^{\chi(t) z} -1-\chi(t) z }{\e^{\chi(t) z} -1}= 1-\frac{\chi(t) z}{\e^{\chi(t) z} -1},
\end{align}
with $(\rho(t))_{0\leq t\leq T}$ being the risk premium process defined in \eqref{rho-jumps}.
When the measure $\mathbb{Q}$ exists, then 
\begin{align*}
\ud W_{\mathbb{Q}}(t) &= \varphi(t,S)  \, \ud t + \ud W(t) ,\\
\tilde{N}_{\mathbb{Q}}(\ud t, \ud z) &= \zeta(t,z)  \, \ell(\ud z)  \, \ud t + \tilde{N}(\ud t,\ud z) 
\end{align*}
are respectively a Wiener process and a compensated jump measure under $\mathbb{Q}$ with compensator $\tilde{\ell}_t(\ud z)\ud t = (1-\zeta(t,z)) \ell(\ud z)\ud t$. 
The dynamics of the discounted price process of $\tilde{S}$ under $\mathbb{Q}$ is
\begin{align}\label{dynamics-S-jumps}
\ud \tilde{S}(t) = \tilde{S}(t-)\left(\chi(t) c \,  \ud W_{\mathbb{Q}}(t) + \int\limits_\R \left( \e^{\chi(t-)z} -1\right) \tilde{N}_{\mathbb{Q}}(\ud t, \ud z) +\rho(t)  \, \ud t\right).
\end{align} 
To prove the existence of the measure $\mathbb{Q}$, we need the following lemma
which shows that $(r(t)+\rho(t))^2$ can be bounded by the maximum of 
$(\ln S(t))^2$ on $t\in[0,T]$. To show this crucial bound we apply techniques similar to the proof of
the Gronwall Inequality, that works here due to the particular Ornstein-Uhlenbeck-like structure of the dynamics of $r$ and $\rho$.
\begin{lemma}\label{boundedness-tilde-X}
Let $r$, $\rho$, and $\xi$ be respectively as in \eqref{r-jumps}, \eqref{rho-jumps}, and \eqref{eq10} where we assume $A$, $\bar{A}$, $B_i$, $\bar{B}_i$, $i\in \{1,2,3\}$ to be uniformly bounded by a constant, $B_1$ and $\bar{B}_1$ to be  functions of bounded variation on $[0,T]$, $H(0) = 1$, $\tilde{C}<\chi(t) <C$, $\mathbb{P}$-a.s., $\forall t\in [0,T]$, for strictly positive constants $\tilde{C}$ and $C$, 
and the initial condition $\xi(0) = \xi_0$ is bounded by a constant $\mathbb{P}$-a.s.
 Further, assume that the kernel function $M$ is square integrable over $[0,T]$. Define 
\begin{equation}\label{X}
X(t) = r(t)+\rho(t) .
\end{equation}
Then 
$$
X^2(t) \leq C\left(1+\sup_{0\leq s\leq t}\xi^2(s)\right)\,, \qquad 0\leq t\leq T\,,
$$
for some positive constant $C$ (depending on $T$).
\end{lemma}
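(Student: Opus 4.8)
The plan is to first rewrite the Ornstein--Uhlenbeck dynamics \eqref{r-jumps}--\eqref{rho-jumps} so that the martingale terms disappear in favour of $\ud\xi$. Combining the Langevin equation \eqref{eq10} with the L\'evy decomposition \eqref{Levy} one gets the identity
\[
\chi(t)c\,\ud W(t)+\chi(t-)\int_\R z\,\tilde N(\ud t,\ud z)=\chi(t-)\,\ud L(t)-\chi(t)b\,\ud t=\ud\xi(t)-\Big(\int_0^tM(t-u)\xi(u)\,\ud u+\chi(t)b\Big)\ud t .
\]
Substituting this into \eqref{r-jumps} and \eqref{rho-jumps}, the drift term $\bar B_1(t)\int_0^tM(t-u)\xi(u)\,\ud u$ of $\rho$ cancels exactly against the term produced by the substitution, leaving the reduced equations
\[
\ud r(t)=\Big[A(t)-B_2(t)r(t)-B_1(t)\!\int_0^t\! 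M(t-u)\xi(u)\,\ud u-B_1(t)\chi(t)b\Big]\ud t+B_1(t)\,\ud\xi(t),
\]
\[
\ud\rho(t)=\Big[\bar A(t)-\bar B_2(t)r(t)-\bar B_3(t)\rho(t)-\bar B_1(t)\chi(t)b\Big]\ud t+\bar B_1(t)\,\ud\xi(t).
\]
Both are free of stochastic integrals; the only non-smooth object left is $\int_0^tB_1(s)\,\ud\xi(s)$ (resp.\ with $\bar B_1$).

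Next I would estimate everything pathwise. Integration by parts gives $\int_0^tB_1(s-)\,\ud\xi(s)=B_1(t)\xi(t)-B_1(0)\xi(0)-\int_0^t\xi(s-)\,\ud B_1(s)$, where the covariation term vanishes: $B_1$ is deterministic of bounded variation, its jump set is a fixed countable set, while $\xi$, whose jumps are $\Delta\xi(t)=\chi(t-)\Delta L(t)$ by \eqref{eq10}, a.s.\ has no jump at any fixed time. Since $B_1$ is bounded with finite total variation, $\big|\int_0^tB_1(s)\,\ud\xi(s)\big|\le C\big(1+\sup_{0\le s\le t}|\xi(s)|\big)$. The memory term is controlled by Cauchy--Schwarz and square integrability of $M$, namely $\big|\int_0^sM(s-u)\xi(u)\,\ud u\big|\le \|M\|_{L^2[0,T]}\sqrt T\,\sup_{0\le u\le s}|\xi(u)|$, and the remaining drift pieces are bounded because $\chi$ is bounded and $A,\bar A,B_i,\bar B_i$ are uniformly bounded. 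Collecting terms after integrating in time yields $|r(t)|\le K(t)+C\int_0^t|r(s)|\,\ud s$ with $K(t)=C\big(1+\sup_{0\le s\le t}|\xi(s)|\big)$ nondecreasing in $t$; Gronwall's inequality in the form $|r(t)|\le K(t)\e^{Ct}$ (which keeps the bound in terms of $\sup_{0\le s\le t}$) gives $|r(t)|\le C\big(1+\sup_{0\le s\le t}|\xi(s)|\big)$. Inserting this into the reduced equation for $\rho$ and applying Gronwall once more gives the same bound for $|\rho(t)|$, hence $|X(t)|\le C\big(1+\sup_{0\le s\le t}|\xi(s)|\big)$, and squaring together with $2a\le 1+a^2$ gives $X^2(t)\le C\big(1+\sup_{0\le s\le t}\xi^2(s)\big)$.

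I expect the main obstacle to be the bookkeeping around the integration by parts / stochastic Fubini rather than any conceptual difficulty: one must justify that $\int_0^tB_1(s)\,\ud\xi(s)$ can be treated pathwise, which is exactly where the bounded-variation hypothesis on $B_1,\bar B_1$ enters (as flagged in the statement), and one must ensure that the two successive Gronwall estimates preserve the dependence on $\sup_{0\le s\le t}\xi^2(s)$ rather than on $\sup_{0\le s\le T}\xi^2(s)$. The final constant absorbs $\e^{CT}$, $\|M\|_{L^2[0,T]}$, $\sqrt T$, the upper bound on $\chi$, and the uniform bounds on $A,\bar A,B_i,\bar B_i$, so it depends only on $T$ and these model parameters, as claimed.
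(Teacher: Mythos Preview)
Your proposal is correct and follows the same strategy as the paper: both eliminate the martingale terms by substituting $\ud\xi$ via the Langevin equation, handle $\int_0^t B_1\,\ud\xi$ through integration by parts using the bounded-variation hypothesis on $B_1,\bar B_1$, and control the memory integral by Cauchy--Schwarz together with the square integrability of $M$. The only cosmetic difference is that the paper multiplies by the integrating factor $\exp\bigl(\int_0^t B_2\bigr)$ to solve explicitly for $r(t)$ before bounding, whereas you integrate the reduced SDE directly and invoke Gronwall---two equivalent devices for absorbing the linear term $-B_2(t)r(t)$.
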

\begin{proof}
Inserting \eqref{eq10} and \eqref{Levy} in \eqref{r-jumps}, we find after some rearrangement
\begin{equation}\label{SDEr(t)}
\ud r(t) + B_2(t) r(t) \,\ud t= [A(t) - B_1(t)\chi(t) b ] \,\ud t + B_1(t)\, \ud \xi(t) -B_1(t) \int\limits_0^tM(t-u) \xi(u)\, \ud u \,\ud t.
\end{equation}
Multiplying both sides of \eqref{SDEr(t)} with $\e^{\int_0^tB_2(s)\ud s}$,  applying the product rule to $\ud (\e^{-\int_0^tB_2(u)\,\ud u}B_1(t) \xi(t))$,  integrating both sides from zero to $t$, and dividing by $\e^{\int_0^tB_2(u)\ud u}$ we get
\begin{equation}
\begin{aligned}
r(t)& =  \e^{-\int_0^tB_2(u)\, \ud u}r(0)+\int\limits_0^t\e^{-\int_s^tB_2(u)\,\ud u}[A(s) - B_1(s)\chi(s) b  ] \,\ud s\\
& \phantom{=} + B_1(t)\xi(t)-\e^{-\int_0^tB_2(u)\,\ud u}B_1(0)\xi_0-\int\limits_0^t\e^{-\int_s^tB_2(u)\,\ud u}B_1(s)B_2(s) \xi(s) \,\ud s \\ 
& \phantom{=} -\int\limits_0^t\e^{-\int_s^tB_2(u)\,\ud u}\xi(s)\,\ud B_1(s) -\int\limits_0^t\e^{-\int_s^tB_2(u)\,\ud u}B_1(s)\int\limits_0^sM(s-u) \xi(u)\, \ud u \,\ud s\,.
\end{aligned}
\end{equation}
Applying the triangle inequality, the boundedness of $\chi$, $A$, $B_1$ and $B_2$ and the assumption on $\xi_0$
we bound $r(t)$ for $t\leq T$ by
\begin{align*}
|r(t)| & \leq \e^{K_1T}( K_2+K_3T )+ K_4|\xi(t)| +\e^{K_1T} K_5\int\limits_0^t |\xi(s)|\, \ud s\\
& \phantom{\leq}+ \e^{K_1T}\int\limits_0^t |\xi(s)|| \ud B_1(s)| + \e^{K_1T}K_4 \int\limits_0^t \int\limits_0^s| M(s-u)||\xi(u)|\, \ud u \,\ud s ,
\end{align*}
for some positive constants $K_1,\ldots, K_5$. Since $B_1$ is of bounded variation on $[0,T]$, we further obtain
\begin{align*}
|r(t)| & \leq \e^{K_1T}( K_2+K_3T )+\e^{K_1T} K_6\Big( T+  \int\limits_0^t \int\limits_0^s| M(s-u)|\, \ud u \,\ud s\Big) \sup_{0\leq s\leq t} |\xi(s)|\,,
\end{align*}
 with some additional positive constant $K_6$.
By Cauchy-Schwartz inequality,
\begin{align*}
\int\limits_0^T\int\limits_0^s|M(s-u)|\,\ud u\,\ud s&\leq\int\limits_0^T(\int\limits_0^s1\,\ud s)^{1/2}(\int\limits_0^s|M(s-u)|^2\,\ud u)^{1/2}\ud s \\
&\leq (\int\limits_0^T|M(u)|^2\,\ud u)^{1/2}\int\limits_0^Ts^{1/2}\,\ud s=\frac23 T^{3/2}(\int\limits_0^TM^2(u)\,\ud u)^{1/2}
\end{align*}
which is finite by assumption. Hence, we find,
\begin{equation}
\label{boundr}
|r(t)|\leq \tilde{K}_1+\tilde{K}_2\sup_{0\leq s\leq t}|\xi(s)|,
\end{equation}
where $\tilde{K}_1,\tilde{K}_2$ are two positive constants, depending on $T$.
As $\sqrt{\cdot}$ is an increasing function, 
$$
\sup_{0\leq s\leq t}|\xi(s)|=\sup_{0\leq s\leq t}\sqrt{|\xi(s)|^2}\leq\sqrt{\sup_{0\leq s\leq t}|\xi(s)|^2}
$$
and therefore, by an elementary inequality,
$$
|r(t)|^2\leq C_1+C_2\sup_{0\leq s\leq t}|\xi(s)|^2\,,
$$
for some positive constants $C_1,C_2$ (depending on $T$).

Analogous to  \eqref{r-jumps} we can transform \eqref{rho-jumps} into
\begin{equation}\label{SDErho(t)}
\ud \rho(t)+ \bar{B}_3(t)\rho(t)\,\ud t = [\bar{A}(t) - \bar{B}_1(t)\chi(t) b -\bar{B}_2(t) r(t) ]\, \ud t + \bar{B}_1(t) \,\ud \xi(t)\,,
\end{equation}
Multiplying both sides of \eqref{SDErho(t)}  with  $\e^{\int_0^t\bar{B}_3(s)\ud s}$,  applying the product rule to  $\ud (\e^{-\int_0^t\bar{B}_3(u)\,\ud u}\bar{B}_1(t)\xi(t))$, integrating both sides from zero to $t$, and dividing by  $\e^{\int_0^t\bar{B}_3(u)\ud u}$, we get
\begin{equation}\label{rho(t)solved}
\begin{aligned}
\rho(t)& =  \e^{-\int_0^t\bar{B}_3(u)\, \ud u}\rho(0) +\int\limits_0^t\e^{-\int_s^t\bar{B}_3(u)\,\ud u} [\bar{A}(s) - \bar{B}_1(s)\chi(s) b -\bar{B}_2(s) r(s)]\, \ud s\\ 
&  \phantom{=} + \bar{B}_1(t)\xi(t)-\e^{-\int_0^t\bar{B}_3(u)\,\ud u}\bar{B}_1(0)\xi_0 -\int\limits_0^t\e^{-\int_s^t\bar{B}_3(u)\,\ud u}\bar{B}_1(s)\bar{B}_3(s) \xi(s) \,\ud s\\
&  \phantom{=} -\int\limits_0^t\e^{-\int_s^t\bar{B}_3(u)\,\ud u}\xi(s)\,\ud \bar{B}_1(s) .
\end{aligned}
\end{equation}
Taking absolute values and appealing to the boundedness assumptions along with
  the estimate \eqref{boundr} we can bound $\rho(t)$ in \eqref{rho(t)solved}  for $t\leq T$ by the same arguments as for $r$ to arrive at
\[
|\rho(t)|^2 \leq C_3+ C_4\sup_{0\leq s\leq t} \xi^2(s)\,,
\]
for positive constants $C_3$ and $C_4$ (depending on $T$). The result follows. 
\end{proof}
We state the existence of the martingale measure $\mathbb{Q}$ in the following proposition.

\begin{proposition}\label{martingale-measure-jump}
Under the assumptions of Lemma \ref{boundedness-tilde-X} 
the process $(Z(t))_{0\leq t\leq T}$ defined by \eqref{process-z-jump} is a true $\mathbb{P}$-martingale and $\mathbb{E}[Z(T)]=1$.
\end{proposition}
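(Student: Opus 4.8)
The plan is to invoke the extended Bene\v{s} criterion of \cite{Klebaner}; everything then reduces to a linear-growth bound on the Girsanov drift together with non-explosion of $\xi$. First I would identify $Z$ in \eqref{process-z-jump} as the stochastic (Dol\'eans--Dade) exponential $Z=\mathcal{E}(N)$ of the local martingale
\[
N(t)=-\int_0^t\varphi(s,S)\,\ud W(s)-\int_0^t\int_\R\zeta(s-,z)\,\tilde{N}(\ud s,\ud z)\,,
\]
by matching \eqref{process-z-jump} with the Dol\'eans--Dade formula. Since $1-\zeta(t,z)=\chi(t)z/(\e^{\chi(t)z}-1)>0$ (numerator and denominator share their sign and the ratio tends to $1$ as $z\to0$), the exponent in \eqref{process-z-jump} is well defined and $Z$ is strictly positive; hence $Z$ is a nonnegative local martingale, therefore a $\mathbb{P}$-supermartingale with $\E[Z(t)]\le Z(0)=1$, and it suffices to prove $\E[Z(T)]=1$.

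Next I would establish the growth estimates that the Bene\v{s} argument needs. From \eqref{phi}, using $\tilde{C}<\chi(t)<C$, the Cauchy--Schwarz bound
\[
\Big|\int_0^tM(t-u)\xi(u)\,\ud u\Big|\le\Big(\int_0^TM^2(u)\,\ud u\Big)^{1/2}T^{1/2}\sup_{0\le s\le t}|\xi(s)|\,,
\]
which is finite because $M$ is square integrable, the uniform boundedness of $b,c,A,\bar{A},B_i,\bar{B}_i$, and --- crucially --- Lemma \ref{boundedness-tilde-X} to control $r(t)+\rho(t)$, one obtains $\varphi^2(t,S)\le K\bigl(1+\sup_{0\le s\le t}\xi^2(s)\bigr)$ on $[0,T]$. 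For the jump part I would check, from \eqref{zeta} and the estimates $\zeta(t,z)=O(z)$ as $z\to0$ and $1-\zeta(t,z)\le1+C|z|$, that $\int_\R\zeta^2(t,z)\,\ell(\ud z)$ and $\int_\R|\log(1-\zeta(t,z))+\zeta(t,z)|\,\ell(\ud z)$ are finite uniformly in $t\in[0,T]$, using $\int_{|z|\le1}z^2\ell(\ud z)<\infty$ near the origin and the standing hypothesis $\int_{|z|\ge1}z^2\ell(\ud z)<\infty$ for the tails; this also makes $N$ a genuine local martingale. Finally, the explicit solution \eqref{eq04}, the boundedness of $H$ on $[0,T]$ and the square integrability of $L$ give $\E[\sup_{0\le t\le T}\xi^2(t)]<\infty$, so $\xi$ does not explode.

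With these ingredients I would run the localization underlying the Bene\v{s} method. Put $\tau_n=\inf\{t\le T:|\xi(t)|\ge n\}\wedge T$; on $[0,\tau_n]$ both $\varphi(\cdot,S)$ and the compensator of the jump part of $N$ are bounded, so $Z^{\tau_n}$ is a uniformly integrable martingale and the measure $\mathbb{Q}_n$ with $\ud\mathbb{Q}_n=Z(\tau_n)\,\ud\mathbb{P}$ is a probability measure with $\E[Z(\tau_n)]=1$. Writing $1=\E[Z(T)\mathbf{1}_{\{\tau_n=T\}}]+\mathbb{Q}_n(\tau_n<T)$, the first term increases to $\E[Z(T)]$ by monotone convergence, because $\tau_n\uparrow T$ $\mathbb{P}$-a.s.\ by non-explosion, so it remains to show $\mathbb{Q}_n(\tau_n<T)\to0$. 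Under $\mathbb{Q}_n$ the process $\xi$ solves on $[0,\tau_n]$ an equation of the same Volterra type with an extra drift $-\chi(t)c\,\varphi(t,S)$, which is still of linear growth in $\sup_{s\le t}|\xi(s)|$ by Lemma \ref{boundedness-tilde-X}, and with jump compensator $(1-\zeta(t,z))\ell(\ud z)$ whose first absolute moment is controlled by $\int_{|z|\ge1}z^2\ell(\ud z)<\infty$; a Gronwall-type estimate then yields $\E^{\mathbb{Q}_n}[\sup_{0\le t\le\tau_n}|\xi(t)|]\le C$ uniformly in $n$, whence $\mathbb{Q}_n(\tau_n<T)\le\mathbb{Q}_n(\sup_{t\le\tau_n}|\xi(t)|\ge n)\le C/n\to0$. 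This last, uniform-in-$n$ moment bound under the family $(\mathbb{Q}_n)$ is the main obstacle: it goes through only because Lemma \ref{boundedness-tilde-X} and the Ornstein--Uhlenbeck structure of $r$ and $\rho$ guarantee that the change of measure does not destroy the linear growth of the drift of $\xi$.
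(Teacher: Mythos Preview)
Your approach is essentially the same as the paper's: both invoke the extended Bene\v{s} criterion of \cite{Klebaner} and obtain the required linear-growth bound on the Girsanov data from Lemma~\ref{boundedness-tilde-X}, the square integrability of $M$, and the uniform bounds on $\chi$. The only difference is packaging: the paper cites \cite[Theorem~5.1]{Klebaner} directly and verifies its specific hypotheses (in particular proving the clean uniform estimate $|\zeta(t,z)|\le|\chi(t)z|$, which gives $\int_\R\zeta^2(t,z)\,\ell(\ud z)\le C\int_\R z^2\,\ell(\ud z)$ in one line), whereas you unfold the localisation-plus-Gronwall argument that underlies that theorem.
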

\begin{proof}
Since $S=\exp (\xi)$, the process $\varphi$ \eqref{phi} can be denoted as a function of $t$ and $\xi$, 
\begin{equation}
\label{hattheta-jump}
\hat{\varphi}(t,\xi):=\varphi(t,S)=\frac{1}{\chi(t)c}[a(t,\xi)+\chi(t) b+ \frac 12 \chi^2(t)c^2-X(t)] .
\end{equation}
where $X$ is given by \eqref{X} and the functions
\begin{equation}\label{a(t,xi)}
a(t,\xi) = \int\limits_0^t M(t-s)\xi(s)\, \ud s \qquad \mbox{and} \qquad b(t,\xi) = \chi(t)
\end{equation}
correspond to those in the version \eqref{SDEgeneral}  of the SDE \eqref{eq10}.
Since by Proposition \ref{existence1} and Remarks \ref{remarks3}, the SDE \eqref{SDEgeneral} has a unique solution $\xi$, the result then follows by applying \cite[Theorem 5.1]{Klebaner} with this semimartingale $\xi$ and the martingale given by
\[
\int\limits_0^t\hat{\varphi}(s,\xi)  \,\ud{W}(s) -\int\limits_0^t \int\limits_\R \zeta(s,z)  \,\tilde{N}(\ud s, \ud z),
\] 
with $\zeta(s,z)$ defined in \eqref{zeta}. 

Hereto we prove that the conditions of \cite[Theorem 5.1]{Klebaner} are satisfied.
\begin{itemize}
	\item $|\xi_0|=|\eta|<C$ by the assumptions in this proposition. 
	\item Invoking H\"older's inequality, the boundedness of $\chi$ and the square integrability of $M$ we obtain for $a(t,\xi)$ in \eqref{a(t,xi)}
	\begin{align}
	|a(t,\xi)|^2  &\leq \Big(\int\limits_0^tM^2(t-s) \,\ud s\Big)\Big( \int\limits_0^t\xi^2(s) \, \ud s\Big) \nonumber\\
&\leq T\int_0^TM^2(s)\,\ud s\sup_{0\leq s\leq t}|\xi(s)|^2\,.\label{bounda}
	\end{align}
Next, the combination of Lemma \ref{boundedness-tilde-X}, equations \eqref{hattheta-jump}, \eqref{bounda} and the uniform boundedness of $\chi$  leads to
	\begin{equation}\label{hatthetabound-jump}
	\hat{\varphi}^2(t,\xi)\leq \frac{3}{\chi^2(t) c^2} \left(a^2(t,\xi) +X^2(t) +[\chi(t)b+ \frac{1}{2}\chi^2(t) c^2]^2\right)\leq C\left(1+\sup_{s\leq t}\xi^2(s)\right) .
	\end{equation}
Let us turn our attention to $\zeta(t,z)$: one can easily show that 
	\[
	|\zeta(t,z)|\leq |\chi(t) z|\,. 
	\]
Indeed consider for the moment the function 
$$
f(u):=\frac{\e^u-1-u}{\e^u-1}\,,
$$
for $u\in\mathbb R$. First, we observe that $\lim_{u\rightarrow 0}f(u)=0$, so the function does not have any singularity at zero. Consider next
$u\geq 0$. Then, since $\exp(u)-1-u=\int_0^u\big(\exp(s)-1\big)\ud s$, we find
$$
f(u)=\int_0^u\frac{\e^s-1}{\e^u-1}\,\ud s\,,
$$ 
which shows that $f(u)\geq 0$ for $u\geq 0$. Moreover, as $\exp(s)-1$ is an increasing function for $s\in [0,u]$, the integrand is less than 1, and we find
$f(u)\leq u$. For $u<0$, we consider the function
$$
g(v)=\frac{1-\e^{-v}-v}{1-\e^{-v}}\,,v>0\,,
$$
where we observe that $g(|u|)=f(u)$ for $u<0$. Then we find similar to above that  $1-\exp(-v)-v=-\int_0^v\big(1-\exp(-s)\big)\ud s$.
Hence,
$$
g(v)=-\int_0^v\frac{1-\e^{-s}}{1-\e^{-v}}\,\ud s\,.
$$
Thus, we see that $g(v)<0$ and that the integrand is non-negative and less than 1 since $1-\exp(-s)$ is increasing for $s\in [0,v]$. It follows that 
$|g(v)|\leq v$ for $v>0$, or, $|f(u)|\leq |u|$. Letting $u:=\chi(t)z$, the desired inequality is reached.
		Hence, using again the uniform boundedness of $\chi$,
	\begin{equation*}
		\int\limits_\R \zeta^2(t,z)  \,\ell(\ud z)\leq 	\int\limits_\R \chi^2(t) z^2 \, \ell(\ud z)\leq \tilde{C}\int\limits_\R z^2  \,\ell(\ud z) ,
	\end{equation*}
	for a positive constant $\tilde{C}$. Adding up the latter and  \eqref{hatthetabound-jump}, we get
	\begin{equation}\label{eq32}
	\hat{\varphi}^2(t,\xi) + \int\limits_\R \zeta^2(t,z) \, \ell(\ud z)  \leq K\left(1+\sup_{s\leq t}\xi^2(s)\right) .
	\end{equation}
	\item 
	Insertion of the decomposition \eqref{Levy} of the process $(L(t))_{0\leq t\leq T}$ transforms the SDE \eqref{SDEgeneral} for $\xi$ into
	$$\ud \xi(t) = (a(t,\xi) +\chi(t) b)  \, \ud t + \chi(t)c  \,\ud W(t) + \chi(t-)\int\limits_\R z   \,\tilde{N}(\ud t, \ud z) .$$
The bound \eqref{bounda} and the uniform boundedness of $\chi$, immediately provide
	\begin{equation}\label{Ltchi}
	L(t,\xi):=(a(t,\xi)+\chi(t) b)^2+\chi^2(t) c^2 + \chi^2(t)\int\limits_\R  z^2 \,\ell(\ud z) \leq C\left(1+\sup_{s\leq t}\xi^2(s)\right).
	\end{equation}
	\item  Noting that $(\e^{\chi(t) z} -1)\chi(t) z\geq 0$ implies $\zeta(t,z)\leq 1$, the uniform boundedness of $\chi$ and the estimates \eqref{eq32} and \eqref{Ltchi} lead to
		\begin{align*}
	\mathfrak{L}(t,\xi)&\!\! := L(t, \xi) + \chi^2(t) c^2 \hat{\varphi}^2(t,\xi) + \chi^2(t) \int\limits_\R  z^2  \, \ell(\ud z)\int\limits_\R \zeta^2(t,z)  \,\ell(\ud z)\\
	&\qquad  + \chi^2(t)\int\limits_\R  z^2\zeta(t,z)  \, \ell(\ud z) \\
	&\leq C\left(1+\sup_{s\leq t}\xi^2(s)\right) .
	\end{align*}
\end{itemize}
Hence the result follows.
\end{proof}
Thus, we may conclude that when the kernel function $M$ is square integrable over $[0,T]$, $|\xi_0| < K$, and $\tilde{C}<\chi(t) <C$, $\mathbb{P}$-a.s., $\forall t\in [0,T]$, for strictly positive constants $K$, $\tilde{C}$ and $C$, and when the functions $B_1$ and $\bar{B}_1$ have bounded variation over $[0,T]$ a measure change exists and the memory in the drift of the dynamics of the asset price under the historical measure $\mathbb{P}$ does not enter the $\mathbb Q$-dynamics (recall \eqref{dynamics-S-jumps}). Recalling example \ref{exampleM} we show that this memory kernel $M$ is a square-integrable memory function singular at zero, to which we can associate an $H$ as in \eqref{def-H-series}
and \eqref{def-rec-b}. Let $M(s)=s^{-\alpha}$ for $\alpha>0$. Then we have that 
$$
\int_0^TM^2(s)ds=\int_0^Ts^{-2\alpha}\,ds=\frac{1}{-2\alpha+1}T^{-2\alpha+1}<\infty
$$
whenever $0<\alpha<1/2$. Further since the corresponding $H$ in \eqref{def-H-series}-\eqref{def-rec-b} is a bounded function in time on $[0,T]$, and the process $\chi$ is assumed to be uniformly bounded in $t$ on $[0,T]$, the square integrability of $M$ guarantees that condition \eqref{Fubini-condition} is satisfied.

\section{Pricing of options and forwards}\label{section5}
In this section we derive analytical formulas for spot option prices and forwards when the underlying price process is modelled as $S = \exp(\xi) $, with  $\xi$ 
satisfying an SDE of the type \eqref{BSS} and which is of interest in energy markets.
\subsection{Example I: L\'evy semi-stationary processes} \label{example1}
\textit{L\'evy semi-stationary processes} (LSS) have been originally introduced in \cite{BBV3} for modelling energy spot prices. This class consists of processes $V$ of the form
\begin{equation}\label{eq61}
V(t)=\int\limits_{-\infty}^t H(t-s)\chi(s-) \, \ud L(s),
\end{equation}
where $H$ is a real-valued function on $[0,\infty)$ with $H(t-s) = 0$ for $s>t$ and $\chi$ is an $\mathbb{F}$-adapted c\`adl\`ag positive process. Here, we have assumed that $L$ is a two-sided L\'evy process, and that the stochastic integral is well-defined.
We note that  $V$ is a Volterra process with a time-homogeneous kernel $H$. 

The name L\'evy semi-stationary processes has been derived from the fact that the process $V$ is stationary as soon as $\chi$ is stationary. In the case that $L$ equals a two-sided Brownian motion $W$, we call such processes Brownian semi-stationary (BSS) processes which have been recently introduced by \cite{BS} in the context of modelling turbulence in physics.

Assume that $H(0)=1$, that relation \eqref{eq02ii} holds for a unique $M$, and that the following integrability condition on $H$ and $\chi$ is satisfied 
$$\int\limits_{-\infty}^T \E\Big[\int\limits_{-\infty}^T M^2(t-u)H^2(u-s) \chi^2(s)  \, \ud u\Big] \ud s < \infty , \quad \forall t \in [0,T] .$$
Computing  the differential of $V$, we arrive at
\begin{align}\label{differential-Z}
\ud V(t) = \int\limits_{-\infty}^t \dot{H}(t-u) \chi(u-) \, \ud L(u) \,\ud t + \chi(t-)  \, \ud L(t).
\end{align} 
Proceeding along the same lines as in the proof of Proposition \ref{existence1}, we obtain that \eqref{eq61} is a unique solution to \eqref{differential-Z}
and 
\begin{align}\label{process-Z}
\ud V(t) &= \int\limits_{-\infty}^t \int\limits_0^{t-u} M(t-u-v) H(v)  \,\ud v \,\chi(u-) \, \ud L(u) \, \ud t + \chi(t-) \, \ud L(t)\nonumber\\
&= \int\limits_{-\infty}^t M(t-s) \int\limits_{-\infty}^s H(s-u)\chi(u-) \, \ud L(u) \, \ud s \, \ud t + \chi(t-) \, \ud L(t) \nonumber \\
&=\int\limits_{-\infty}^t M(t-s) V(s)  \, \ud s \, \ud t + \chi(t-) \, \ud L(t).
\end{align}
\subsubsection{Spot option prices}
From now on, integrals containing the kernel function $M$ have to be considered over $(-\infty,t]$ instead of  $[0,t]$ with $0\leq t\leq T$. In particular, the SDE for the risk premium $\rho$ is 
of OU type   \eqref{rho-jumps} but with mean level $\int_{-\infty}^tM(t-s)V(s)\, \ud s$. 
In the following proposition we state the existence of a martingale measure to compute the option price written on $S= \exp(V)$. This result follows immediately from Proposition \ref{martingale-measure-jump} and equation \eqref{process-Z}.
\begin{proposition} 
Let the assumptions of Proposition \ref{martingale-measure-jump} hold. Then we know there exists a martingale measure $\mathbb{Q}$ under which 
$(\e^{-\int_0^t(\rho(s) +r(s))  \ud s} \e^{V(t)})_ {0\leq t\leq T}$ is a $\mathbb{Q}$-martingale and the price of an (call/put) option written on $S$ is given by 
\begin{equation}
C(t) = \E^\mathbb{Q} \left[\e^{-\int_t^Tr(s)\, \ud s} \max\Big(\varepsilon(S(T)-K), 0\Big)|\mathcal{F}_t\right]\,,\label{option-price}
\end{equation} 
for $t\leq T$, where $\varepsilon=\pm 1$ (for call and put, resp.). Here, $\mathbb{Q}$ is as described in \eqref{process-z-jump} with $\varphi$ given by
$$ \varphi(t,S) =  \frac{1}{\chi(t) c}\left(\,\int\limits_{-\infty}^t M(t-u)\log(S(u))   \ud u +\chi(t) b+ \frac{1}{2}\chi^2(t) c^2-r(t)-\rho(t)\right) $$
and $\zeta$ is as in \eqref{zeta}.
\end{proposition}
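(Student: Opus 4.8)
The plan is to transfer the machinery of Section~\ref{change} almost verbatim, with the process $\xi$ of \eqref{eq10} replaced by the LSS process $V$. By \eqref{process-Z} the process $V$ solves $\ud V(t)=\big(\int_{-\infty}^t M(t-s)V(s)\,\ud s\big)\,\ud t+\chi(t-)\,\ud L(t)$, which is exactly the generalised Langevin dynamics \eqref{eq10} except that the memory integral now runs over $(-\infty,t]$. Consequently $S=\exp(V)$ has an It\^o decomposition of the form \eqref{price-process} with $\log S(u)$ integrated over $(-\infty,t]$, the candidate density process $Z$ of \eqref{process-z-jump} remains well defined once $\varphi$ is taken as in the statement and $\zeta$ as in \eqref{zeta}, and the formal $\mathbb{Q}$-dynamics \eqref{dynamics-S-jumps} of $\tilde{S}(t)=\e^{-\int_0^t r(s)\,\ud s}S(t)$ becomes $\ud\tilde{S}(t)=\tilde{S}(t-)\big(\chi(t)c\,\ud W_{\mathbb{Q}}(t)+\int_\R(\e^{\chi(t-)z}-1)\tilde{N}_{\mathbb{Q}}(\ud t,\ud z)+\rho(t)\,\ud t\big)$.

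First I would re-run the proof of Lemma~\ref{boundedness-tilde-X} with $V$ in the role of $\xi$. The domain of the memory integral enters only through the estimate \eqref{bounda} applied to $a(t,V)=\int_{-\infty}^t M(t-s)V(s)\,\ud s$; this Cauchy--Schwarz bound survives provided $M$ is square integrable on the relevant half-line and the LSS integrability hypothesis assumed above yields $\sup_{s\le t}V^2(s)<\infty$ $\mathbb{P}$-a.s. The Ornstein--Uhlenbeck-type Gronwall argument for $r$ and $\rho$ is then unchanged and gives $(r(t)+\rho(t))^2\le C\big(1+\sup_{0\le s\le t}V^2(s)\big)$. With this bound, Proposition~\ref{martingale-measure-jump} applies with the semimartingale $V$ and the martingale $\int_0^t\hat{\varphi}(s,V)\,\ud W(s)-\int_0^t\int_\R\zeta(s,z)\,\tilde{N}(\ud s,\ud z)$: the four conditions of \cite[Theorem~5.1]{Klebaner} are verified precisely as before, using $H(0)=1$, $\tilde{C}<\chi(t)<C$, bounded $\xi_0$, and the elementary inequality $|\zeta(t,z)|\le|\chi(t)z|$. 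Hence $Z$ is a true $\mathbb{P}$-martingale with $\E[Z(T)]=1$, so $\mathbb{Q}$ defined by $\ud\mathbb{Q}=Z(T)\,\ud\mathbb{P}$ is a probability measure equivalent to $\mathbb{P}$.

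Next I would establish the martingale property. Under $\mathbb{Q}$, $W_{\mathbb{Q}}$ is a Wiener process and $\tilde{N}_{\mathbb{Q}}$ a compensated jump measure, so removing the $\rho(t)\,\ud t$ drift from \eqref{dynamics-S-jumps} shows that $Y(t):=\e^{-\int_0^t(r(s)+\rho(s))\,\ud s}S(t)$ satisfies $\ud Y(t)=Y(t-)\big(\chi(t)c\,\ud W_{\mathbb{Q}}(t)+\int_\R(\e^{\chi(t-)z}-1)\tilde{N}_{\mathbb{Q}}(\ud t,\ud z)\big)$ and is therefore a non-negative local $\mathbb{Q}$-martingale; the uniform boundedness of $\chi$ together with the integrability of $z^2\,\ell(\ud z)$ (equivalently, that $ZY$ is a true $\mathbb{P}$-martingale, which is the content of the construction of $\mathbb{Q}$) upgrades it to a genuine $\mathbb{Q}$-martingale. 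This is the asserted martingale property of $(\e^{-\int_0^t(\rho(s)+r(s))\,\ud s}\e^{V(t)})_{0\le t\le T}$. Finally, pricing under the pricing measure $\mathbb{Q}$ in the sense recalled in Section~\ref{sect1} gives $C(t)=\E^{\mathbb{Q}}\big[\e^{-\int_t^T r(s)\,\ud s}\max(\varepsilon(S(T)-K),0)\mid\mathcal{F}_t\big]$ with the displayed $\varphi$ and with $\zeta$ as in \eqref{zeta}.

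The step requiring the most care is the extension of Lemma~\ref{boundedness-tilde-X}: one must guarantee that the memory integral over $(-\infty,t]$ is genuinely dominated by $\sup_{s\le t}|V(s)|$, which calls for integrability of $M$ over the whole half-line (or a support/decay restriction compatible with the square-integrability assumed in Proposition~\ref{martingale-measure-jump}) and for the $\mathbb{P}$-a.s.\ finiteness of that supremum, the latter resting on the LSS integrability hypothesis. Once that bound is in place, the rest is a transcription of the proofs of Lemma~\ref{boundedness-tilde-X} and Proposition~\ref{martingale-measure-jump}, as the proposition statement indicates, and the pricing identity is then immediate.
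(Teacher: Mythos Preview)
Your proposal is correct and follows exactly the route the paper takes: the paper's entire proof is the one line ``This result follows immediately from Proposition~\ref{martingale-measure-jump} and equation~\eqref{process-Z}'', i.e., rewrite the LSS process $V$ in the Langevin form \eqref{process-Z} and then apply the measure-change result of Section~\ref{change} verbatim with $V$ in place of $\xi$. Your write-up simply unpacks this, and you even flag the one genuine technical adaptation the paper glosses over---namely that the memory integral now runs over $(-\infty,t]$, so the Cauchy--Schwarz estimate \eqref{bounda} and the bound in Lemma~\ref{boundedness-tilde-X} require integrability of $M$ on the full half-line and a.s.\ finiteness of $\sup_{s\le t}|V(s)|$; the paper handles this only by the blanket remark preceding the proposition that ``integrals containing the kernel function $M$ have to be considered over $(-\infty,t]$''.
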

Notice that the existence of an equivalent martingale measure for the spot price $S$ was proved  by 
rewriting the process $V$ in terms of the kernel $M$. The proof would not be straightforward if using equation \eqref{eq61} or 
\eqref{differential-Z} directly.

When we assume $r$  and $\chi$ to be deterministic functions of time and exploiting the affine structure under the measure $\mathbb{Q}$ of the process $(Z, \rho)$, we can derive an expression of the option price $C$ in terms of the Fourier transform of the payoff function. For the sake of simplicity we put $\bar{B}_1=\bar{B}_3:=\bar{B}$ in the SDE for $\rho$ but the case $\bar{B}_1\neq \bar{B}_3$ can also be dealt with along similar lines leading to more involved expressions for the solutions of the Riccati equations. 
\begin{proposition} \label{Prop7}
Let the assumptions of Proposition \ref{martingale-measure-jump} hold. Assume $r$ and $\chi$ are deterministic functions of time.  
Define
\begin{equation}\label{tilde-f}
\tilde{f}(\lambda) = \frac{1}{2\pi} \frac{K^{-(\omega -1+\mathrm{i} \lambda)}}{(\omega+\mathrm{i} \lambda)(\omega -1+\mathrm{i} \lambda)}\,, \qquad \omega, \lambda \in \R\,,
\end{equation}
and denote
\begin{align}\nonumber
\varpi_1(t)& = r(t) -\frac{1}{2}\chi^2(t)c^2 -\chi(t) \int\limits_\R z \left(1-\frac{\chi(t) z}{\e^{\chi(t) z}-1}\right)\, \ell(\ud z)\,, \\ \label{varpi2}
 \varpi_2(t) & = \bar{A}(t)+\bar{B}(t)(\varpi_1(t)-\chi(t) b)-\bar{B}_2(t)r(t).
\end{align}
Then the price of an option written on $S$ is given by
\begin{align}\label{option-price1}
C(t) = \e^{-\int_t^Tr(s)\, \ud s} \int\limits_{\R}\e^{\phi(t,T, \,\omega+\mathrm{i}\lambda,\, 0) +  (\omega+\mathrm{i}\lambda) V(t) +(T-t)(\omega+\mathrm{i}\lambda) \rho(t)} \tilde{f}(\lambda) \, \ud \lambda\,,
\end{align}
for
$$
\left\{
    \begin{array}{ll}
        \omega>1\,, & \mbox{if } \quad  \varepsilon=1\,, \quad \mbox{call}, \\
        \omega<0\,, & \mbox{if } \quad  \varepsilon=-1\,, \quad \mbox{put}\,,
    \end{array}
\right.
$$
where the function $\phi$ solves 
\begin{align*}
&\partial_t\phi(t,u_1,u_2)\\
 &\quad = -\frac{1}{2}\chi^2(t)c^2 \left[u_1 + \bar{B}(t) (u_2+u_1(T-t))\right]^2 - \varpi_1(t) u_1  - \varpi_2(t) (u_2+u_1(T-t))\\
&\quad \qquad - \int\limits_\R \Big(\exp\{ [u_1  + \bar{B}(t)(u_2+u_1(T-t)) ]\chi(t) z\}  -1 \\
&\qquad \qquad \qquad 
- [u_1  + \bar{B}(t)(u_2+u_1(T-t)) ]\chi(t) z \Big) \frac{\chi(t) z}{\e^{\chi(t) z}-1}\, \ell(\ud z)\,.
\end{align*}
\end{proposition}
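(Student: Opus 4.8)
The plan is to represent the (exponentially dampened) call/put payoff as an inverse Fourier integral of exponentials, pull the deterministic discount factor and the $\ud\lambda$-integral outside the conditional expectation, and then compute $\E^\mathbb{Q}[\e^{(\omega+\mathrm{i}\lambda)V(T)}\mid\mathcal{F}_t]$ by observing that under $\mathbb{Q}$ the pair $(V,\rho)$ is a time-inhomogeneous affine process, to which Theorem~\ref{Affine-process-charac} applies. First I would use that $r$ is deterministic to write \eqref{option-price} as $C(t)=\e^{-\int_t^T r(s)\,\ud s}\,\E^\mathbb{Q}[\max(\varepsilon(\e^{V(T)}-K),0)\mid\mathcal{F}_t]$ and invoke the standard dampened-payoff identity $\max(\varepsilon(\e^x-K),0)=\int_\R\e^{(\omega+\mathrm{i}\lambda)x}\,\tilde f(\lambda)\,\ud\lambda$, with $\tilde f$ as in \eqref{tilde-f}: for $\omega>1$ in the call case and $\omega<0$ in the put case the function $\e^{-\omega x}\max(\varepsilon(\e^x-K),0)$ lies in $L^1(\R)$ and $\tilde f$ is its Fourier transform, the closed form following from an elementary integration over the exercise region. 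Interchanging $\E^\mathbb{Q}[\,\cdot\mid\mathcal{F}_t]$ and $\int\ud\lambda$ — justified by dominating $|\tilde f(\lambda)|$, which is $O(\lambda^{-2})$, against the finite moment $\E^\mathbb{Q}[\e^{\omega V(T)}\mid\mathcal{F}_t]$ furnished by the affine formula below — yields $C(t)=\e^{-\int_t^T r(s)\,\ud s}\int_\R\E^\mathbb{Q}[\e^{(\omega+\mathrm{i}\lambda)V(T)}\mid\mathcal{F}_t]\,\tilde f(\lambda)\,\ud\lambda$.

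The core of the proof is to derive the $\mathbb{Q}$-dynamics of $(V,\rho)$ and exhibit its affine structure. Applying It\^o's formula to $S=\e^{V}$ in \eqref{dynamics-S-jumps}, and using that with $\chi$ deterministic the compensated measure $\tilde N_{\mathbb{Q}}$ has the \emph{deterministic} compensator $\tilde\ell_t(\ud z)\,\ud t=(1-\zeta(t,z))\ell(\ud z)\,\ud t=\frac{\chi(t)z}{\e^{\chi(t)z}-1}\ell(\ud z)\,\ud t$, one obtains $\ud V(t)=(\varpi_1(t)+\rho(t))\,\ud t+\chi(t)c\,\ud W_{\mathbb{Q}}(t)+\int_\R\chi(t-)z\,\tilde N_{\mathbb{Q}}(\ud t,\ud z)$, with $\varpi_1$ exactly as in \eqref{varpi2} after collecting the Gaussian It\^o term $-\tfrac12\chi^2c^2$ and the compensator correction. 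For $\rho$, substituting $\ud W=\ud W_{\mathbb{Q}}-\varphi(t,S)\,\ud t$ and $\tilde N=\tilde N_{\mathbb{Q}}-\zeta(t,z)\ell(\ud z)\,\ud t$ into \eqref{rho-jumps} (with $\bar B_1=\bar B_3=\bar B$) and inserting the identity $\chi(t)c\,\varphi(t,S)=\int_{-\infty}^tM(t-u)V(u)\,\ud u+\chi(t)b+\tfrac12\chi^2(t)c^2-r(t)-\rho(t)$ read off from \eqref{phi} (integrals now over $(-\infty,t]$), the memory term $\int_{-\infty}^tM(t-u)V(u)\,\ud u$ and the term $-\bar B\rho$ both cancel, leaving $\ud\rho(t)=\varpi_2(t)\,\ud t+\bar B(t)\chi(t)c\,\ud W_{\mathbb{Q}}(t)+\int_\R\bar B(t)\chi(t-)z\,\tilde N_{\mathbb{Q}}(\ud t,\ud z)$ with $\varpi_2$ as in \eqref{varpi2}. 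Hence the drift of $X:=(V,\rho)^\top$ depends only on $(t,X(t))$, affinely (only $\rho$ enters, linearly, in the drift of $V$), the diffusion matrix $\varrho(t)=\chi^2(t)c^2\begin{pmatrix}1&\bar B(t)\\\bar B(t)&\bar B^2(t)\end{pmatrix}$ is deterministic, and the jumps are $\iota(t,z)=(\chi(t)z,\bar B(t)\chi(t)z)^\top$ with deterministic compensator $\tilde\ell_t(\ud z)\,\ud t$; this is exactly the affine set-up preceding Theorem~\ref{Affine-process-charac}, with $\varpi(t)=(\varpi_1(t),\varpi_2(t))^\top$, $\beta_1=0$, $\beta_2=(1,0)^\top$, $\alpha_1=\alpha_2=0$.

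It then remains to apply Theorem~\ref{Affine-process-charac} with $R\equiv0$ (the discount factor already extracted, so the constants $c$ and $\gamma$ from $R(t)=c+\gamma^\top X(t)$ drop out of \eqref{Riccati-equations}) and $u=(\omega+\mathrm{i}\lambda,0)^\top$, conditioning at time $t$ with ``state'' $(V(t),\rho(t))$. The $\psi$-equations in \eqref{Riccati-equations} become linear: $\partial_t\psi_1=0$, so $\psi_1\equiv\omega+\mathrm{i}\lambda$; and $\partial_t\psi_2=-\psi_1$ with $\psi_2(T)=0$, so $\psi_2(t)=(\omega+\mathrm{i}\lambda)(T-t)$. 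Writing $\psi(t)=(u_1,u_2+u_1(T-t))^\top$ with the free terminal data $(u_1,u_2)=(\omega+\mathrm{i}\lambda,0)$ and plugging $\varpi(t)$, $\varrho(t)$, $\iota(t,\cdot)$ and $\tilde\ell_t$ into the $\phi$-equation of \eqref{Riccati-equations} reproduces precisely the displayed ODE for $\phi$ together with $\phi(T,T,u_1,u_2)=0$, which is solved by a single quadrature; its right-hand side $\ell(\ud z)$-integral is finite because $\frac{\chi(t)z}{\e^{\chi(t)z}-1}$ decays exponentially as $z\to+\infty$ and linearly as $z\to-\infty$, combined with $\int_{|z|\geq1}z^2\,\ell(\ud z)<\infty$ and the requirement that $\mathrm{Re}(u_1+\bar B(t)(u_2+u_1(T-t)))$ remain in the admissible range — which is where $\omega>1$, $\omega<0$ and the boundedness of $\bar B$ and $T-t$ enter. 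Theorem~\ref{Affine-process-charac} then gives $\E^\mathbb{Q}[\e^{(\omega+\mathrm{i}\lambda)V(T)}\mid\mathcal{F}_t]=\exp\{\phi(t,T,\omega+\mathrm{i}\lambda,0)+(\omega+\mathrm{i}\lambda)V(t)+(T-t)(\omega+\mathrm{i}\lambda)\rho(t)\}$, and inserting this into the Fourier integral of the first step yields \eqref{option-price1}.

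The main obstacle is the second step: verifying that the Girsanov integrand $\varphi$ in \eqref{phi} is tuned exactly so that the memory term $\int_{-\infty}^tM(t-u)V(u)\,\ud u$ disappears from the $\mathbb{Q}$-drifts of both $V$ and $\rho$, leaving a finite-dimensional affine system with deterministic characteristics — and, relatedly, that the $\mathbb{Q}$-compensator of the jumps is deterministic, which hinges on $\chi$ being deterministic. Granting that, the Fourier/Fubini reduction and the solution of the (now linear) Riccati system together with the quadrature for $\phi$ are routine.
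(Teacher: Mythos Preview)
Your proposal is correct and follows essentially the same route as the paper: derive the $\mathbb{Q}$-dynamics of $(V,\rho)$, recognise it as a time-inhomogeneous affine process with deterministic characteristics, solve the (linear) Riccati system $\psi_1\equiv u_1$, $\psi_2(t)=u_2+u_1(T-t)$, and then combine Theorem~\ref{Affine-process-charac} with the Fourier representation of the dampened payoff. The paper simply states the $\mathbb{Q}$-dynamics \eqref{Z-Q}--\eqref{rho-Q} and cites \cite[Theorem~3.4]{KM} and \cite[Lemma~10.2]{Filipovic} for the Fourier step, whereas you spell out explicitly why the memory integral and the $-\bar B\rho$ term cancel in the drift of $\rho$ under the Girsanov shift and why the $\mathbb{Q}$-compensator is deterministic (this is where the hypothesis that $\chi$ is deterministic is used); you also make the Fubini justification explicit. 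These are welcome elaborations, not a different argument.
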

\begin{proof}
Observe that the dynamics of $(V, \rho)$ under the measure $\mathbb{Q}$ is given by 
\begin{align}\label{Z-Q}
\ud V(t) &= \left(\rho(t) +\varpi_1(t) \right) \ud t  + \chi(t) c \, \ud W_\mathbb{Q}(t)+ \chi(t)\int\limits_\R  z \tilde{N}_\mathbb{Q}(\ud t, \ud z)\,, \\ \label{rho-Q}
\ud \rho(t) &= \varpi_2(t)\ud t + \bar{B}(t)\chi(t) c\, \ud W_\mathbb{Q}(t) + \bar{B}(t)\chi(t) \int\limits_\R z  \tilde{N}_\mathbb{Q}(\ud t, \ud z)\,.
\end{align}
Define $\varpi^{\top}(t)  = (\varpi _1(t), \varpi _2(t))$,
$$\varrho(t) =  \chi^2(t) c^2\begin{bmatrix}
    1   &   \bar{B}(t)  \\
        \bar{B} (t)     &   \bar{B}^2(t)\,
\end{bmatrix},
\quad 
\beta=  \begin{bmatrix}
0     & 1  \\
          0     &  0\,
\end{bmatrix}
\quad
\iota(t,z) = \chi(t) z \begin{bmatrix} 1\\
\bar{B}(t) \,
\end{bmatrix}\,, 
$$
for $z \in \R$. Let  $(\phi, \psi)$ be a solution to the following Riccati equations 
\begin{align*}
\partial_t \phi(t,T,u) &= - \frac{1}{2} \psi(t,T,u)^{\top} \varrho (t)\, \psi(t,T,u) - \varpi^{\top}(t)\psi(t,T,u) \\
&\qquad- \int\limits_\R \left(\e^{\psi(t,T,u)^{\top}\iota(t,z)} -1- \psi(t,T,u)^{\top} \iota(t,z)\right) \tilde{\ell}_t(\ud z) \,,\\
\phi(T,T,u) &= 0\,,\\
\partial_t \psi(t,T,u) & = -\beta^{\top} \psi(t,T,u)\,, \\
\psi(T,T,u) &= u\,, 
\end{align*}
for $0\leq t \leq T$ and $u \in \mathbb{C}^2$, where we recall that $\tilde{\ell}_t(\ud z)\ud t$ is the compensator of ${N}(\ud t, \ud z)$ under ${\mathbb{Q}}$. Then there exists a unique global solution $(\phi(\cdot,T,u), \psi(\cdot,T,u)): [0,T] \rightarrow \mathbb{C}\times \mathbb{C}^2$\,, for $u \in \mathbb{C}^2$ to the latter system given by 
\begin{align*}
\phi(t,T,u) &= \int\limits_t^T\frac{1}{2} \chi^2(s) c^2\left( \psi_1(s,T,u) +  \bar{B}(s) \psi_2(s,T,u)\right)^2 \, \ud s \\
&\qquad +\int\limits_t^T (\varpi_1(s) \psi_1(s,T,u) + \varpi_2(s) \psi_2(s,T,u))\, \ud s\\
&\qquad + \int\limits_t^T\int\limits_\R \Big(\exp\{[\psi_1(s,T,u)   + \bar{B}(s)\psi_2(s,T,u)]\chi(s) z \} -1 \\
&\qquad \qquad - [\psi_1(s,T,u)   + \bar{B}(s)\psi_2(s,T,u)]\chi(s) z\Big) \, \frac{\chi(s) z}{\e^{\chi(s) z}-1}\,\ell(\ud z) \, \ud s\,,\\
\psi_1(t,T,u) &= u_1\,,\\
\psi_2(t,T,u) &= u_2+u_1(T-t)\,,
\end{align*}
where we used the fact that $\tilde{\ell}_t(\ud z) = (1-\zeta(t,z))\,\ell(\ud z)$, with $\zeta(t,z)$ given by \eqref{zeta}. 
We fix $\omega >1$ and $u^{\top} = (\omega+\mathrm{i}\lambda, 0)$. The latter together with Theorem \ref{Affine-process-charac}, an adaptation of \cite[Theorem 3.4]{KM} to our setting, and \cite[Lemma 10.2]{Filipovic} yield \eqref{option-price} for $\varepsilon = 1$, where $\tilde{f}$ is as in \eqref{tilde-f}. We fix $\omega<0$ and $u^{\top} = (\omega+\mathrm{i}\lambda, 0)$, then also \eqref{option-price} holds for $\varepsilon=-1$. Thus the statement of the proposition follows.
\end{proof}

\subsubsection{Forward prices}\label{Forwards-continuous}
We consider a forward contract on $S$, contracted at $t$ with time of delivery $T\geq t$ and with the forward price $F(t,T)$ such that
$$\E^{\mathbb{Q}}\left[\e^{-\int_t^T r(s) \,\ud s} \left(F(t,T) -S(T)\right) | \mathcal{F}_t\right] =0\,. $$
The latter is equivalent to 
\begin{equation}\label{forward}
F(t,T) = \left(\E^{\mathbb{Q}}\left[\e^{-\int_t^T r(s) \,\ud s}|\mathcal{F}_t\right]\right)^{-1}\E^{\mathbb{Q}}\left[\e^{-\int_t^T r(s) \,\ud s} S(T) | \mathcal{F}_t\right] \,.
\end{equation}
Assuming $r$ and $\chi$ are deterministic, we state in the following proposition an analytical expression for the latter forward price \eqref{forward}.
\begin{proposition}
Let the assumptions of Proposition \ref{martingale-measure-jump} hold. Moreover assume that $r$ and $\chi$ are deterministic functions of time.
Then the forward price \eqref{forward} is given by
\begin{equation}\label{forward-price}
F(t,T) = S(t)\exp\left(\mathcal{A}(t,T)+\int\limits_t^T r(s)\, \ud s +(T-t)\rho(t)\right)
\end{equation}
with
\begin{equation}\label{mathcalA}
\begin{aligned}
\mathcal{A}(t,T)& =\frac 12 c^2 \int\limits_t^T\bar{B}^2(s)\chi^2(s)(T-s)^2\,\ud s +\int\limits_t^T (T-s) \Big[\bar{A}(s)+(\bar{B}(s)-\bar{B}_2(s))r(s)\Big] \ud s\\
&\qquad -\left[b +\int\limits_\R  z \, \ell(\ud z)\right]\int\limits_t^T\bar{B}(s)\chi(s)(T-s)\,\ud s - \frac 32 c^2\int\limits_t^T\bar{B}(s)\chi^2(s)(T-s)\,\ud s \\
 &\qquad + \int\limits_t^T \int\limits_\R \frac{\chi(s) z\e^{\chi(s) z}}{\e^{\chi(s) z} -1} \left[\e^{ (T-s)\bar{B}(s)\chi(s) z} - 1 \right]\,\ell(\ud z) \, \ud s   \,.
\end{aligned}
\end{equation}
\end{proposition}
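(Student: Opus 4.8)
The plan is to evaluate the two conditional expectations in \eqref{forward} using, first, that $r$ is deterministic and, second, the affine structure of $(V,\rho)$ under $\mathbb{Q}$. Since $r(\cdot)$ is a deterministic function of time, $\e^{-\int_t^T r(s)\,\ud s}$ is $\mathcal{F}_t$-measurable, so $\E^{\mathbb{Q}}[\e^{-\int_t^T r(s)\,\ud s}\mid\mathcal{F}_t]=\e^{-\int_t^T r(s)\,\ud s}$, the discount factors in \eqref{forward} cancel, and
\[
F(t,T)=\E^{\mathbb{Q}}[S(T)\mid\mathcal{F}_t]=\E^{\mathbb{Q}}[\e^{V(T)}\mid\mathcal{F}_t].
\]
It thus suffices to compute the conditional moment generating function of $V(T)$ at the point $1$.

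For that I would reuse the affine framework from the proof of Proposition \ref{Prop7} (where, as there, the simplification $\bar{B}_1=\bar{B}_3=\bar{B}$ is in force). Under $\mathbb{Q}$ the process $X=(V,\rho)^{\top}$ solves the affine system \eqref{Z-Q}--\eqref{rho-Q}, with deterministic drift vector $\varpi=(\varpi_1,\varpi_2)^{\top}$ of \eqref{varpi2} and the matrices $\varrho$, $\beta$, $\iota$ exhibited in that proof; this places us in the setting of Theorem \ref{Affine-process-charac} with a vanishing discount rate. For $u^{\top}=(1,0)$ the solution of the associated Riccati system — already obtained in the proof of Proposition \ref{Prop7} — specialises to $\psi_1(s,T,u)\equiv 1$ and $\psi_2(s,T,u)=T-s$, whence Theorem \ref{Affine-process-charac} yields
\[
\E^{\mathbb{Q}}[\e^{V(T)}\mid\mathcal{F}_t]=\e^{\phi(t,T,1,0)+V(t)+(T-t)\rho(t)}=S(t)\,\e^{\phi(t,T,1,0)+(T-t)\rho(t)},
\]
using $S(t)=\e^{V(t)}$. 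Comparing with \eqref{forward-price}, it remains only to identify $\phi(t,T,1,0)$ with $\mathcal{A}(t,T)+\int_t^T r(s)\,\ud s$.

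That last identification is a direct, if lengthy, computation: substitute $\psi_1=1$ and $\psi_2(s)=T-s$ into the closed form of $\phi$ from the proof of Proposition \ref{Prop7} and expand term by term. The Gaussian part is $\tfrac12 c^2\int_t^T\chi^2(s)(1+\bar{B}(s)(T-s))^2\,\ud s$; the drift part $\int_t^T(\varpi_1(s)+(T-s)\varpi_2(s))\,\ud s$ is unfolded using the definitions \eqref{varpi2} and the elementary identity $\chi(s)z(1-\tfrac{\chi(s)z}{\e^{\chi(s)z}-1})=\chi(s)z-\tfrac{\chi^2(s)z^2}{\e^{\chi(s)z}-1}$, which produces $\int_t^T r(s)\,\ud s$ together with the terms of $\mathcal{A}$ that are linear in $r$, in $b$ and in $\int_\R z\,\ell(\ud z)$; and the Lévy part is rewritten via the factorisation $\exp\{[1+\bar{B}(s)(T-s)]\chi(s)z\}=\e^{\chi(s)z}\e^{(T-s)\bar{B}(s)\chi(s)z}$, so that after subtracting the linearisation and multiplying by $\chi(s)z/(\e^{\chi(s)z}-1)$ one isolates exactly the integrand $\frac{\chi(s)z\e^{\chi(s)z}}{\e^{\chi(s)z}-1}(\e^{(T-s)\bar{B}(s)\chi(s)z}-1)$ of the last term of \eqref{mathcalA}, plus a remainder whose $\int_\R\frac{\chi^2(s)z^2}{\e^{\chi(s)z}-1}\ell(\ud z)$ and linear-in-$z$ pieces cancel against the matching contributions carried by $\varpi_1$ and $\varpi_2$. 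Collecting the surviving terms gives \eqref{mathcalA}. The one genuinely delicate step is this bookkeeping: keeping track of the various $c^2\chi^2$ and linear-in-$z$ contributions across the Gaussian, drift and Lévy parts, with their correct signs. One also uses implicitly that $L$ carries enough exponential integrability so that $\E^{\mathbb{Q}}[\e^{V(T)}]$, all the integrals appearing in \eqref{mathcalA}, and the Riccati solution evaluated at $u=(1,0)$ are finite.
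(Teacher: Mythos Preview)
Your argument is correct, but the paper takes a different route. Instead of applying the two-dimensional affine machinery of Proposition~\ref{Prop7} directly to $(V,\rho)$ under $\mathbb{Q}$, the paper introduces a further measure change: it defines $\hat{\mathbb{Q}}\sim\mathbb{Q}$ via the density $\hat{Z}(t)=S(t)S(0)^{-1}\e^{-\int_0^t(r(s)+\rho(s))\,\ud s}$, applies Bayes' rule to obtain
\[
F(t,T)=S(t)\,\e^{\int_t^T r(s)\,\ud s}\,\E^{\hat{\mathbb{Q}}}\!\left[\exp\Big(\int_t^T\rho(s)\,\ud s\Big)\,\Big|\,\mathcal{F}_t\right],
\]
derives the $\hat{\mathbb{Q}}$-dynamics of $\rho$ alone via Girsanov, and then invokes Theorem~\ref{Affine-process-charac} for the \emph{one}-dimensional affine process $\rho$ with $R(s)=-\rho(s)$. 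This yields $\mathcal{B}(t,T)=T-t$ and an integral expression for $\mathcal{A}(t,T)$ that is then simplified.

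The two approaches are essentially dual: your method stays under $\mathbb{Q}$, reuses the Riccati solution $\psi_1=1$, $\psi_2=T-s$ from Proposition~\ref{Prop7} verbatim, and then has to extract $\int_t^T r(s)\,\ud s$ from the scalar $\phi(t,T,1,0)$ during the bookkeeping. The paper's measure change to $\hat{\mathbb{Q}}$ separates the $S(t)\e^{\int_t^T r(s)\,\ud s}$ prefactor from the outset and reduces the remaining computation to a one-dimensional Riccati system, at the cost of an extra Girsanov step to obtain the $\hat{\mathbb{Q}}$-drift of $\rho$. Both land on the same $\mathcal{A}(t,T)$; your route is arguably more economical since it recycles Proposition~\ref{Prop7} wholesale, while the paper's route makes the structure $F(t,T)=S(t)\e^{\int_t^T r(s)\,\ud s}\times(\text{functional of }\rho)$ transparent from the start.
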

\begin{proof}
The process $\hat{Z}$ defined by
$$\hat{Z}(t) := \frac{S(t)}{S(0)}\e^{-\int_0^t (r(s)+\rho(s))\, \ud s} 
\,, \qquad 0\leq t\leq T,$$ 
is a positive $\mathbb{Q}$-martingale with $\mathbb{E}^{\mathbb{Q}}[\hat{Z}(t)]=1$ and defines a probability measure $\mathbb{\hat{Q}} \sim \mathbb{Q} $ by
$$\frac{\ud \mathbb{\hat{Q}}}{\ud \mathbb{Q}}\big|_{\mathcal{F}_t} = \hat{Z}(t)\,, \qquad 0\leq t\leq T.$$
Considering a deterministic $r$ and applying Bayes' rule in \eqref{forward}, we get
\begin{align}\label{bayes}
F(t,T) = S(t) \e^{\int_t^Tr(s) \,\ud s } \E^{\mathbb{\hat{Q}}}\left[\exp\Big(\int\limits_t^T \rho(s) \, \ud s\Big) \Big|  \mathcal{F}_t\right]\,.
\end{align}
Moreover from the Girsanov theorem (see e.g.~\cite[Theorem 1.31]{OKSU}), we know that
\begin{align*}
W_{\mathbb{\hat{Q}}}(t) &= W_{\mathbb{Q}}(t) + \chi(t) c t\,,\\
\tilde{N}_{\mathbb{\hat{Q}}}(\ud t, \ud z) &= (1-\e^{\chi(t) z})\, \tilde{\ell}_t(\ud z) \ud t + \tilde{N}_{\mathbb{Q}}(\ud t, \ud z)\,.
\end{align*}
Recall the dynamics of $\rho$ under $\mathbb{Q}$ in \eqref{rho-Q}. Then the dynamics of $\rho$ under $\mathbb{\hat{Q}}$ is given by
\begin{align*}
\ud \rho(t) &= \Big[\varpi_2(t) - \bar{B}(t)\chi^2(t)c^2 - \bar{B}(t)\chi(t)\int\limits_\R z(1-\e^{\chi(t) z})\, \tilde{\ell}_t(\ud z)\Big] \ud t + \bar{B}(t) \chi(t) c \, \ud W_{\hat{\mathbb{Q}}}(t)\\
&\qquad 
+ \bar{B}(t)\chi(t) \int\limits_\R z \, \tilde{N}_\mathbb{\hat{Q}}(\ud t, \ud z)\,.
\end{align*}
Applying Theorem \ref{Affine-process-charac}, the expectation in \eqref{bayes} becomes
$$\E^{\mathbb{\hat{Q}}}\left[\exp\Big(\int\limits_t^T \rho(s) \, \ud s\Big) \Big|  \mathcal{F}_t\right] = \exp\Big(\mathcal{A}(t,T)  + \mathcal{B}(t,T) \rho(t)\Big)\,,$$
where
\begin{align*}
\mathcal{B}(t,T) &= T-t\,,\\
\mathcal{A}(t,T) &= \frac 12 c^2 \int\limits_t^T\bar{B}^2(s)\chi^2(s)(T-s)^2\,\ud s -c^2\int\limits_t^T\bar{B}(s)\chi^2(s)(T-s)\,\ud s\\
&\quad  - \int\limits_t^T\bar{B}(s)\chi(s)(T-s)
\int\limits_\R  z(1-\e^{\chi(s) z}) \, \tilde{\ell}_s(\ud z) \ud s + \int\limits_t^T\varpi_2(s)(T-s)\, \ud s  \\
&\quad  + \int\limits_t^T\int\limits_\R\Big[\e^{(T-s)\bar{B}(s)\chi(s) z} -1 -(T-s) \bar{B}(s)\chi(s) z\Big]\,\hat{\ell}_s(\ud z)\, \ud s\,,
\end{align*}
where $\hat{\ell}_t(\ud z) \ud t= \e^{\chi(t) z}\, \tilde{\ell}_t(\ud z)\ud t=\e^{\chi(t) z}(1-\zeta(t,z))\ell(dz)\ud t$ is the compensator of ${N}(\ud t, \ud z)$ under ${\mathbb{\hat{Q}}}$. 
Substituting the expression for $\varpi_2(t)$ from \eqref{varpi2} and $\zeta(t,z)$ from \eqref{zeta}, combining and simplifying all these expressions in \eqref{bayes}, we get the result.
\end{proof}

\subsubsection{Options on forwards} 
In this subsection, we consider a call option with strike $K$ and exercise time $T$ written on the forward $F(t,T)$. 
Observe that the price at time $t\leq T$ of this option under the pricing measure $\mathbb{Q}$ defined by  \eqref{process-z-jump} will be 
$$P(t) = \E^{\mathbb{Q}}\left[\e^{-\int_t^T r(s)\, \ud s }(F(t,T)-K)^+ |\mathcal{F}_t\right],$$
where  $F$ is given by \eqref{forward-price}. 

In the following proposition, assuming that $r$ is deterministic, we derive an expression of the call option written on $F$ in terms of the Fourier transform of the payoff function.
\begin{proposition}\label{option-price-forward}
Let the assumptions of Proposition \ref{martingale-measure-jump} hold. Assume that $r$ and $\chi$ are deterministic functions of time. Recall the expression of $\tilde{f}$ in \eqref{tilde-f}.
Then the call option written on $F$ is given by
$$P(t) = \e^{-\int_t^T r(s)\, \ud s } \int\limits_{\R}\e^{\phi(t,T, \,\omega+\mathrm{i}\lambda) + (\omega+\mathrm{i}\lambda) \log F(t,T)} \tilde{f}(\lambda) \, \ud \lambda\,, \qquad \mbox{for }  \omega >1\,,$$
where the function $\phi$ solves 
\begin{align*}
&\partial_t\phi(t,T,u) = -\frac{1}{2} c^2\Sigma^2(t,T) u^2 - \tilde{A}(t,T)u - \int\limits_\R\left(\e^{\Sigma(s,t)zu} -1 - \Sigma(s,t)zu \right)\frac{\chi(t) z}{\e^{\chi(t) z}-1} \ell(\ud z)
\end{align*}
and the process $(\log F(t,T))_{0\leq t\leq T}$ satisfies the SDE
\begin{align}\label{gamma-jumps}
\ud \log F(t,T) &= \tilde{A}(t,T) \, \ud t   + c\Sigma(t,T) \, \ud W_{\mathbb{Q}}(t) + \Sigma(t,T)\int\limits_\R z\, \tilde{N}_\mathbb{Q}(\ud t, \ud z)\,, 
\end{align}
with
\begin{align*}
\tilde{A}(t,T) &= -\frac{1}{2} (T-t)^2 \bar{B}^2(t) \chi^2(t)c^2 + (T-t)\bar{B}(t)\chi^2(t)c^2 - \frac{1}{2} \chi^2(t)c^2 \nonumber\\
&\phantom{=}  
+\int\limits_\R \chi(t) z\left(\frac{\Sigma(t,T)z}{\e^{\chi(t) z}-1}-1\right) \ell(\ud z) 
+  \int\limits_\R \frac{\chi(t) z\e^{\chi(t) z}}{\e^{\chi(t) z} -1} \left[\e^{ (T-t)\bar{B}(t)\chi(t) z} - 1 \right]\ell(\ud z)  \,, \nonumber\\
\Sigma(t,T) &=  \Big(1 + \bar{B}(t)(T-t)\Big)\chi(t)\,.
\end{align*}
\end{proposition}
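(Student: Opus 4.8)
The route mirrors the proof of Proposition~\ref{Prop7}, with $\log F(\cdot,T)$ playing the role of the state variable, and splits into three steps: deriving the $\mathbb{Q}$-dynamics of $\log F(\cdot,T)$, identifying it as a (degenerate) time-inhomogeneous affine process, and inverting the Fourier transform of the payoff. \emph{Step 1 (dynamics under $\mathbb{Q}$).} Since $S=\e^{V}$ here, the explicit forward price \eqref{forward-price} gives $\log F(t,T)=V(t)+\mathcal{A}(t,T)+\int_t^T r(s)\,\ud s+(T-t)\rho(t)$. Applying It\^o's formula and inserting the $\mathbb{Q}$-dynamics \eqref{Z-Q} of $V$ and \eqref{rho-Q} of $\rho$, the $\rho(t)\,\ud t$ contributions coming from $\ud V(t)$ and from $\ud\big((T-t)\rho(t)\big)$ cancel, leaving
\[
\ud\log F(t,T)=\Big(\varpi_1(t)+\partial_t\mathcal{A}(t,T)-r(t)+(T-t)\varpi_2(t)\Big)\ud t+c\,\Sigma(t,T)\,\ud W_{\mathbb{Q}}(t)+\Sigma(t,T)\int_\R z\,\tilde{N}_{\mathbb{Q}}(\ud t,\ud z),
\]
with $\Sigma(t,T)=\big(1+\bar B(t)(T-t)\big)\chi(t)$. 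It remains to differentiate \eqref{mathcalA} in $t$, to substitute $\varpi_2$ from \eqref{varpi2} and the $\mathbb{Q}$-compensator $\tilde\ell_t(\ud z)=(1-\zeta(t,z))\ell(\ud z)=\frac{\chi(t)z}{\e^{\chi(t)z}-1}\ell(\ud z)$ from \eqref{zeta}, and to collect the terms so that the drift coefficient reduces exactly to $\tilde A(t,T)$; this reproduces \eqref{gamma-jumps}. I expect this bookkeeping --- in particular differentiating the L\'evy-measure integrals inside $\mathcal{A}(t,T)$ and reconciling the weights $\frac{\chi z}{\e^{\chi z}-1}$ and $\e^{\chi z}\frac{\chi z}{\e^{\chi z}-1}$ --- to be the main obstacle.

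\emph{Step 2 (affine structure).} The coefficients in \eqref{gamma-jumps} are deterministic, so $X(t):=\log F(t,T)$ is a one-dimensional time-inhomogeneous additive process fitting Theorem~\ref{Affine-process-charac} with $d=1$, $R\equiv 0$, $\varpi(t)=\tilde A(t,T)$, $\varrho(t)=c^2\Sigma^2(t,T)$, $\alpha_1\equiv\beta_1\equiv 0$, $\iota(t,z)=\Sigma(t,T)z$ and deterministic jump compensator $\frac{\chi(t)z}{\e^{\chi(t)z}-1}\ell(\ud z)$. Since $\beta_1\equiv 0$, the $\psi$-equation has the constant solution $\psi(t,T,u)\equiv u$, so the $\phi$-equation reduces to the scalar ODE displayed in the statement, with terminal condition $\phi(T,T,u)=0$, which is solved by quadrature. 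For $u=\omega+\mathrm i\lambda$ with $\omega>1$ one checks that the exponential L\'evy integral entering the $\phi$-equation is finite, using $\int_{|z|\ge 1}z^2\,\ell(\ud z)<\infty$ together with the exponential-moment control on the positive tail of $\ell$ already needed for $S=\e^{V}$ to be well defined (cf.\ the term $\gamma$ in \eqref{price-process}). Theorem~\ref{Affine-process-charac} then yields $\E^{\mathbb{Q}}\big[\e^{u\log F(T,T)}\mid\mathcal{F}_t\big]=\e^{\phi(t,T,u)+u\log F(t,T)}$.

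\emph{Step 3 (Fourier inversion).} For $\omega>1$ the call payoff admits the representation $(\e^{x}-K)^{+}=\int_\R\e^{(\omega+\mathrm i\lambda)x}\tilde f(\lambda)\,\ud\lambda$ with $\tilde f$ as in \eqref{tilde-f}, by \cite[Lemma 10.2]{Filipovic}. As $r$ is deterministic, the factor $\e^{-\int_t^T r(s)\,\ud s}$ leaves the conditional expectation; inserting the representation with $x=\log F(T,T)$, interchanging $\E^{\mathbb{Q}}[\,\cdot\mid\mathcal{F}_t]$ with the $\lambda$-integral by Fubini's theorem (justified by the bound from Step~2 and the decay of $\tilde f$), and applying Step~2, we obtain
\[
P(t)=\e^{-\int_t^T r(s)\,\ud s}\int_\R\E^{\mathbb{Q}}\big[\e^{(\omega+\mathrm i\lambda)\log F(T,T)}\mid\mathcal{F}_t\big]\tilde f(\lambda)\,\ud\lambda=\e^{-\int_t^T r(s)\,\ud s}\int_\R\e^{\phi(t,T,\omega+\mathrm i\lambda)+(\omega+\mathrm i\lambda)\log F(t,T)}\tilde f(\lambda)\,\ud\lambda,
\]
which is the asserted pricing formula.
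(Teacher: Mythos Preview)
Your proposal is correct and follows essentially the same approach as the paper: write $\log F(t,T)=V(t)+\mathcal{A}(t,T)+\int_t^T r(s)\,\ud s+(T-t)\rho(t)$, differentiate and substitute the $\mathbb{Q}$-dynamics \eqref{Z-Q}--\eqref{rho-Q} together with \eqref{mathcalA} to obtain \eqref{gamma-jumps}, then set up the one-dimensional Riccati system with constant $\psi\equiv u$ and apply Theorem~\ref{Affine-process-charac} and \cite[Lemma 10.2]{Filipovic}. The paper likewise omits the explicit drift bookkeeping you flag as the main obstacle, so your level of detail matches.
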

\begin{proof}
Recall that $S=\e^V$, and thus the expression \eqref{forward-price} can be rewritten as 
	\[
F(t,T)=\e^{\Upsilon(t,T)}\;\; \mbox{with}\;\;
\Upsilon(t,T)=V(t)+\mathcal{A}(t,T)+\int\limits_t^T r(s)\, \ud s +\rho(t)(T-t).
\]
Differentiating $\Upsilon(t,T)$ with respect to $t$ and substituting the dynamics \eqref{Z-Q}-\eqref{rho-Q} of $V$ and $\rho$ under $\mathbb{Q}$ and the expression \eqref{mathcalA} for $\mathcal{A}(t,T)$ we get the dynamics \eqref{gamma-jumps} for $\log F(t,T)=\Upsilon(t,T)$.
Let $\phi$ and $\psi $ be a solution to the following Riccati equations 
\begin{align*}
\partial_t \phi(t,T,u) &= -\frac{1}{2} c^2\Sigma^2(t,T)\psi^2(t,T,u)  - \tilde{A}(t,T) \psi(t,T,u)\\
&\qquad  - \int\limits_\R \Big(\e^{\psi(t,T,u) \Sigma(t,T)z} - 1- \psi(t,T,u)\Sigma(t,T)z\Big)(1-\zeta(t,z)) \ell(\ud z)  \,,\\
\phi(T,T,u) &= 0\,,\\
\partial_t \psi(t,T,u) & = 0\,, \\
\psi(T,T,u) &= u\,, 
\end{align*}
for $(t,u) \in \R_+\times \mathbb{C}$ and where $\zeta(z)$ is defined in \eqref{zeta}.  
Then there exists a unique global solution $(\phi(\cdot,u), \psi(\cdot,u)): \R_+ \rightarrow \mathbb{C}\times \mathbb{C}$\,, for all $u \in \mathbb{C}$ to the latter system given by 
\begin{align*}
\phi(t,T,u) &= \int\limits_t^T \left(\frac{1}{2}c^2 \Sigma^2(s,T) \psi^2(s,T,u) + \tilde{A}(s,T) \psi(s,T,u) \right) \ud s\\
&\quad + \int\limits_t^T\int\limits_\R\Big(\e^{\psi(s,T,u)\Sigma(s,T)z } -1 - \psi(s,T,u)\Sigma(s,T)z\Big)(1-\zeta(t,z))  \ell(\ud z) \, \ud s\,,\\
\psi(t,T,u) &= u\,.
\end{align*}
We fix $\omega >1$ and $u = \omega+\mathrm{i}\lambda$. The latter together with Theorem \ref{Affine-process-charac}, an adaptation of \cite[Theorem 3.4]{KM} to our setting, and \cite[Lemma 10.2]{Filipovic} yield
the statement of the proposition. 
\end{proof}

\subsection{Example II: CARMA processes}

In this section we introduce a model that is used for modelling temperature and wind speed in weather derivatives, see \cite[\S 4]{Benth}. It is the continuous-time version of an ARMA model, namely a CARMA model where CARMA stands for {\it continuous-time autoregressive moving average}. For the special case of a CARMA(2,1) model one of the factors determining the temperature or wind speed satisfies an SDE of type \eqref{eq10}. As noted in \cite[\S 6.2]{Benth} stationary CARMA processes are special cases of BSS processes.

Let us introduce the model as in \cite{Benth}.
Consider the stochastic vector process $\textbf{X}$ taking values in $\mathbb{R}^p$, with $p\geq 1$, which is defined as the solution to the SDE
\begin{equation}\label{eq20}
\ud\textbf{X}(t)= A\textbf{X}(t)\ud t +\textbf{e}_p\vartheta(t) \ud W(t)
\end{equation}
where $\textbf{e}_k \in\mathbb{R}^p$, $k = 1, \ldots,p$, is the $k$th standard Euclidean basis vector of $\mathbb{R}^p$. The $(p \times p)$-matrix $A$ is defined as
\begin{equation}\label{matrixA}
\begin{bmatrix}
0 & 1 & 0 & \cdots & 0\\
0 & 0 & 1 & \cdots & 0\\
\vdots & \vdots & \vdots& \ddots & \vdots \\
0 & 0 & 0 & \cdots & 1\\
-\alpha_p & -\alpha_{p-1} & -\alpha_{p-2} & \cdots & -\alpha_1
\end{bmatrix} .
\end{equation}
The constants $\alpha_k$, $k = 1, \ldots,p$, are assumed to be non-negative with $\alpha_p > 0$. 
The deterministic function $\vartheta : \mathbb{R}^+ \rightarrow \mathbb{R}$ in the dynamics of $\textbf{X}$ is assumed to be bounded, continuous and
$\vartheta(t)$ is strictly bounded away from zero, i.e., there exists a constant $\overline{\vartheta} > 0$ such that $\vartheta(t)\geq \overline{\vartheta}$ for all $t \geq  0$. Further, we will assume for the sake of simplicity that $\textbf{X}(0)=\textbf{0}$.

Since the entries in the matrix $A$ are constants it follows from \cite[Theorem 4.10]{Lipster} 
 that the unique solution to the SDE \eqref{eq20} with initial condition $\textbf{X}(0)=\textbf{0}$ is given by
\begin{equation}\label{eq21}
\textbf{X}(t)=\int\limits_0^t \exp(A(t-s))\textbf{e}_p\vartheta(s) \ud W(s).
\end{equation}
Differentiating ${\textbf{X}}$ in \eqref{eq21}, we get 
\begin{align}\label{SDEX}
\ud {\textbf X}(t) &= \textbf{e}_p \vartheta(t)\ud W(t) + \int\limits_0^t A\e^{A(t-s)}\textbf{e}_p\vartheta(s)\ud W(s) \ud t.
\end{align}
The latter is an SDE of type \eqref{BSS} with 
$$g(t) = A\e^{At}, \qquad  \sigma(t) =\textbf{e}_p \vartheta(t), \qquad  {\textbf X}(0) = \textbf{0}.$$
From the existence of the solution \eqref{eq21} we infer that the integrability condition \eqref{condition-g} holds and we have 
\begin{equation}\label{finitness}
 \int\limits_0^T \left( A\exp(A(t-s))\textbf{e}_p\right)^{\top}A\exp(A(t-s))\textbf{e}_p\vartheta^2(s) \ud s < \infty.
\end{equation}
From relation \eqref{eq02ii}, the kernel $M$ associated with the process $\textbf{X}$ is such that
\begin{equation}
\label{eq32bis} 
g(t) = A\e^{At} = \int\limits_0^tM(t-u) H(u) \ud u = \int\limits_0^tM(t-u) \e^{Au}  \ud u.
\end{equation}
A solution to the latter equation is given by $M(t)= A \delta_0(t)$, where $\delta_0$ is the Dirac delta function at zero. Using Proposition \ref{existence1}, we know $\textbf{X}$ can be rewritten as  
\begin{align*}
\ud {\textbf X}(t) &= \textbf{e}_p \vartheta(t) \ud W(t) + \int\limits_0^tA \delta_0(t-u) {\textbf X}(u) \ud u \ud t\\
&= \textbf{e}_p \vartheta(t)\ud W(t) + A{\textbf X}(t) \ud t,
\end{align*}
and we recover equation \eqref{eq20} for $\textbf{X}$. Of course in this particular case this latter result follows straightforwardly from \eqref{SDEX}. 
Here we wanted to point out that the process ${\textbf X}$ fits into our framework. 

For $0 \leq q < p$, define the vector $\textbf{b} \in \mathbb{R}^p$ with coefficients $b_j$, $j =
0, 1, \ldots , p -1$ satisfying $b_q = 1$ and $b_j = 0$ for $q < j < p$. 
We define the CARMA$(p, q)$ process $Y$ as
\begin{equation}\label{eq22}
Y (t) = {\textbf{b}}^{\top}\textbf{X}(t)\,.
\end{equation}
Substituting the expression \eqref{eq21} for $\textbf{X}(t)$, we can express $Y$ as
\begin{equation}\label{eq23}
Y (t) = \int\limits_0^t {\textbf{b}}^{\top}\exp(A(t-s))\textbf{e}_p\vartheta(s) \ud W(s) .
\end{equation}
It is evident that $Y$ is a Gaussian process with mean zero and variance 
$$\int\limits_0^t({\textbf{b}}^{\top}\exp(A(t-s))\textbf{e}_p)^2\vartheta^2(s) \ud s.$$
There are two special cases of the CARMA$(p,q)$. First, when $\textbf{b}=\textbf{e}_1$, then $q=0$ and we obtain a so-called CAR$(p)$ process that is used to model temperature. Second, requiring $q=p-1$ then we get for example for $p=2$, a CARMA$(2,1)$ model that can be used to model wind speed.

Applying the It\^{o} formula to \eqref{eq23}, we find as dynamics of the process $Y$ 
\begin{equation}\label{eq24}
\ud Y (t) = \left(\int\limits_0^t {\textbf{b}}^{\top} A\exp(A(t-s))\textbf{e}_p\vartheta(s) \ud W(s)\right) \ud t + {\textbf{b}}^{\top} \textbf{e}_p\vartheta(t) \ud W(t).
\end{equation}
Clearly, the drift term involves the whole path of the Wiener process up to time $t\geq 0$. 

The SDE \eqref{eq24} for $Y$ is also of type \eqref{BSS} with 
$$g(t) = \textbf{b}^{\top} A\exp(At)\textbf{e}_p, \quad  H(t)=\textbf{b}^{\top} \exp(At)\textbf{e}_p, \quad \sigma(t) = \vartheta(t), \quad {Y}(0) = 0.$$
To fit into our framework, we need that the drift in \eqref{eq24} can be written
as $\int_0^tM(t-s)Y(s)\,\ud s$ for some memory function $M$.
Substituting \eqref{eq21} in \eqref{eq24} we have 
\begin{equation*}
\ud Y (t) = {\textbf{b}}^{\top} A {\textbf X}(t) \ud t + {\textbf{b}}^{\top} \textbf{e}_p\vartheta(t) \ud W(t)\,,
\end{equation*}
and hence $M$ must satisfy the equation
\begin{equation}
\label{carma-m}
\mathbf{b}^{\top}A\mathbf X(t)=\int_0^tM(t-s)Y(s)\,\ud s\,.
\end{equation}
We emphasize that $M$ is a scalar function here. Since $Y(t)=\mathbf{b}^{\top}\mathbf X(t)$, we find
$$
\mathbf{b}^{\top}A\mathbf X(t)=\mathbf{b}^{\top}\int_0^tM(t-s)\mathbf X(s)\,\ud s\,.
$$
Take the Laplace transform on both sides to find
$$
\mathbf b^{\top} A \hat{\mathbf X}(\theta)=\mathbf b^{\top}(\hat{M}(\theta)\hat{\mathbf X}(\theta))\,,
$$
for all $\theta\in\mathbb R$, where $\hat{M}(\theta)$ is the Laplace transform \eqref{LaplaceM} of $M$ and $\hat{\mathbf X}(\theta)$ is a vector
in $\mathbb R^p$ consisting of the Laplace transform of the coordinates of $\mathbf X$. But then we see that $\hat{M}(\theta)$ is an eigenvalue
of $A$ for all $\theta\in\mathbb R$ with eigenvector $\hat{\mathbf X}(\theta)$. As $A$ has fixed eigenvectors and eigenvalues, this cannot be true and we conclude that there exist no $M$ such that \eqref{carma-m} holds. 

Now we cannot apply the results in Section 4 for the measure change. An alternative argument to show the existence of a measure change for
a process being the exponential of a CARMA is presented in \cite[Proposition 5.1]{Benth}. We will extend this result and study pricing in
our context.

To this end, for a CARMA($p, q$)-process $Y(t)=\textbf{b}^{\top}\textbf{X}(t)$, define the spot dynamics as
\begin{equation}\label{SofY}
S(t)=\exp(\mu t+Y(t))\,,
\end{equation}
where we suppose that $\textbf{X}\in\mathbb R^p$ follows the Ornstein-Uhlenbeck dynamics with level $\boldsymbol{\xi}\in\mathbb R^p$,
$$
\ud \textbf{X}(t)=(\boldsymbol{\xi}+A\textbf{X}(t))\,\ud t+\vartheta\textbf{e}_p\, \ud W(t)\,.
$$
A simple application of the multidimensional It\^o formula gives the $\mathbb{P}$-dynamics of $S$ as
\begin{equation}
\ud S(t)=\left(\mu+\frac12(\textbf{b}^{\top}\textbf{e}_p)^2\vartheta^2+\textbf{b}^{\top}(\boldsymbol{\xi}+A\textbf{X}(t))\right)S(t)\, \ud t+\vartheta(\textbf{b}^{\top}\textbf{e}_p)S(t)\, \ud W(t)\,.
\end{equation}
For simplicity we suppose that $\vartheta$ is constant. 

We recall that $\textbf{b}^{\top}=(b_0,b_1,\ldots,b_q,0\ldots,0)\in\mathbb R^p$, where $q<p$ and
$b_q=1$. If $q<p-1$, then $\textbf{b}^{\top}\textbf{e}_p=0$ since the last coordinate of $\textbf{b}$ in
this case is zero. Hence, if $q<p-1$ we do not have any martingale term in the dynamics of $S$ and therefore
cannot change measure to any $\mathbb{Q}\sim \mathbb{P}$ so that $\exp(-rt)S(t)$ is a $\mathbb{Q}$-martingale. We assume from now on
that
$$
q=p-1\,.
$$
In this case, $\textbf{b}^{\top}\textbf{e}_p=1$ and the dynamics of $S$ under $\mathbb{P}$ is
\begin{equation}\label{S-CARMA}
\ud S(t)=\left(\mu+\frac12\vartheta^2+\textbf{b}^{\top}(\boldsymbol{\xi}+A\textbf{X}(t))\right)S(t)\, \ud t+\vartheta S(t)\, \ud W(t)\,.
\end{equation}
Introduce next a state-dependent risk premium, 
\begin{equation}
\label{convenience-spec-carma}
\rho(t):=\textbf{c}^{\top}(\boldsymbol{\theta}+C\textbf{X}(t))\,,
\end{equation}
where $\boldsymbol{\theta}\in\mathbb R^p$, 
$\textbf{c}^{\top}=(c_0,c_1,\ldots,c_{\tilde{q}},0,\ldots,0)\in\mathbb R^p$, $\tilde{q}<p$ and
\begin{equation}\label{Cmatrix}
C=\begin{bmatrix}
0 & 1 & 0 & \cdots & 0\\
0 & 0 & 1 & \cdots & 0\\
\vdots & \vdots & \vdots& \ddots & \vdots \\
0 & 0 & 0 & \cdots & 1\\
-\beta_p & -\beta_{p-1} & -\beta_{p-2} & \cdots & -\beta_1
\end{bmatrix}\,,
\end{equation}
for $\beta_i>0, i=1,\ldots,p$. The natural choice of ``moving-average vector'' $\textbf{c}$ is, as we will see,
$\textbf{c}=\textbf{b}$.

Define the stochastic process
\begin{equation}
\theta(t, \textbf{X})=\vartheta^{-1}\left(\mu+\frac12\vartheta^2-r+\textbf{b}^{\top}({\boldsymbol{\xi}}+A\textbf{X}(t))-\rho(t)\right)\,.
\end{equation}
With the specification of the risk premium as in \eqref{convenience-spec-carma}, we find
\begin{equation}
\label{theta-carma}
\theta(t, \textbf{X})=\frac{\mu+\frac12\vartheta^2-r+\textbf{b}^{\top}\boldsymbol{\xi}-\textbf{c}^{\top}\boldsymbol{\theta}}{\vartheta}+\vartheta^{-1}
\left(\textbf{b}^{\top}A-\textbf{c}^{\top}C\right)\textbf{X}(t)\,.
\end{equation}
Hence, $\theta$ is a constant plus a linear combination of the coordinates of $\textbf{X}(t)$. 

In the following proposition we state the existence of a martingale measure to compute the option price written on $S$ as given in \eqref{S-CARMA}.
\begin{proposition}\label{Prop10}
Let $S$ be as in \eqref{S-CARMA} with $\textbf{X}$ defined by the SDE \eqref{eq20} and let $q=p-1$. Assume that the eigenvalues of the matrix $A$ all have negative real part, the risk-free interest rate $r$ is constant and $\rho$ is defined by \eqref{convenience-spec-carma}. Then there exists a martingale measure $\mathbb{Q}$ under which $(\e^{-
\int_0^t\rho(s)\ud s - rt} S(t))_{0\leq t \leq T}$ is a $\mathbb{Q}$-martingale  and the price of an (call/put)
option written on $S$ is given by
$$C(t)=\e^{-r(T-t)}\mathbb{E}^{\mathbb{Q}}\Big[ \max\Big(\varepsilon(S(T)-K),0\Big)\mid \mathcal{F}_{t}\Big],$$
for $t\leq T$ and $\varepsilon=\pm 1$ (call and put, resp.), where $\mathbb{Q}$ is given by 
\begin{equation*}
\frac{\ud \mathbb{Q}}{\ud \mathbb{P}}\Big\vert_{\mathcal{F}_t} =
 Z(t)=\exp\Big[\int\limits_0^t\theta(s,{\textbf{X}}) \ud{W}(s)-\frac 12 \int\limits_0^t \theta^2(s,{\textbf{X}}) \ud s\Big], \quad t\in [0,T],
\end{equation*}
$\theta(t,\textbf{X})$ is as in \eqref{theta-carma}, and the $\mathbb{Q}$-dynamics of $S$ is given by
\begin{equation}\label{SDEforSunderQ}
\ud S(t)=(r+\rho(t))S(t)\,\ud t+\vartheta S(t)\, \ud W_\mathbb{Q}(t)\,,
\end{equation}
for a $\mathbb Q$-Brownian motion $W_\mathbb{Q}$.
\end{proposition}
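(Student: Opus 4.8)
The plan is to proceed as in \cite[Proposition 5.1]{Benth}, of which this proposition is an extension, and in parallel with the proof of Proposition \ref{martingale-measure-jump} but now in the purely Wiener-driven (no jump) setting. There are two things to establish. First, I would show that the exponential local martingale $Z$ is a true $\mathbb{P}$-martingale with $\E[Z(T)]=1$, so that $\mathbb{Q}$ is a genuine probability measure. Second, I would identify the $\mathbb{Q}$-dynamics of $S$ via Girsanov, from which \eqref{SDEforSunderQ}, the $\mathbb{Q}$-martingale property of the discounted price, and the valuation formula for $C$ all follow.

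For the first step, I would first recall that $\textbf{X}$ solves the linear SDE $\ud\textbf{X}(t)=(\boldsymbol{\xi}+A\textbf{X}(t))\,\ud t+\vartheta\textbf{e}_p\,\ud W(t)$, whose coefficients are affine in the state and hence of linear growth; this gives a unique strong solution, and solving it explicitly shows $\textbf{X}$ to be a Gaussian, Ornstein--Uhlenbeck type process, $\textbf{X}(t)=\int_0^t\e^{A(t-s)}\boldsymbol{\xi}\,\ud s+\vartheta\int_0^t\e^{A(t-s)}\textbf{e}_p\,\ud W(s)$. The assumption that the eigenvalues of $A$ have negative real part makes $\|\e^{At}\|$ bounded uniformly on $[0,\infty)$, so that the moments of $\textbf{X}(t)$ are bounded uniformly in $t\in[0,T]$. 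Next I would note from \eqref{theta-carma} that $\theta(t,\textbf{X}(t))$ is an affine functional of the state, $\theta(t,\textbf{X}(t))=\kappa+\textbf{d}^{\top}\textbf{X}(t)$ for a constant $\kappa$ and the vector $\textbf{d}=\vartheta^{-1}(A^{\top}\textbf{b}-C^{\top}\textbf{c})$, so that $\theta^2(t,\textbf{X}(t))\leq K(1+\sup_{0\leq s\leq t}|\textbf{X}(s)|^2)$ for some constant $K$. Combining the linear growth of the coefficients of the SDE for $\textbf{X}$ with this linear growth bound on $\theta$ in the path, I would invoke \cite[Theorem 5.1]{Klebaner}, the extended Bene\v{s} criterion already used in Proposition \ref{martingale-measure-jump}, in its no-jump form with the semimartingale $\textbf{X}$ and the local martingale $\int_0^t\theta(s,\textbf{X})\,\ud W(s)$, to conclude that $Z$ is a true $\mathbb{P}$-martingale and $\E[Z(T)]=1$.

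For the second step, I would set $\ud\mathbb{Q}/\ud\mathbb{P}|_{\mathcal{F}_t}=Z(t)$; by Girsanov's theorem (e.g.\ \cite[Theorem 1.31]{OKSU}) the process $W_\mathbb{Q}$ introduced by the change of measure is an $(\mathcal{F}_t)$-Brownian motion under $\mathbb{Q}$, with $\ud W(t)$ equal to $\ud W_\mathbb{Q}(t)$ plus the drift correction proportional to $\theta(t,\textbf{X})$. Substituting this into the $\mathbb{P}$-dynamics \eqref{S-CARMA} of $S$ and using the defining identity $\vartheta\theta(t,\textbf{X})=\mu+\tfrac12\vartheta^2-r+\textbf{b}^{\top}(\boldsymbol{\xi}+A\textbf{X}(t))-\rho(t)$ coming from \eqref{theta-carma} and \eqref{convenience-spec-carma}, the drift of $S$ collapses to exactly $(r+\rho(t))S(t)$, which is \eqref{SDEforSunderQ}. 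Then $\e^{-\int_0^t\rho(s)\,\ud s-rt}S(t)$ satisfies $\ud(\,\cdot\,)=\vartheta\,\e^{-\int_0^t\rho(s)\,\ud s-rt}S(t)\,\ud W_\mathbb{Q}(t)$, a $\mathbb{Q}$-local martingale; since the Girsanov drift shift is again affine in $\textbf{X}$, the process $\textbf{X}$ stays an Ornstein--Uhlenbeck type Gaussian process under $\mathbb{Q}$ (only its drift matrix and mean level change), so $S(t)=\exp(\mu t+\textbf{b}^{\top}\textbf{X}(t))$ has finite exponential moments on $[0,T]$ and the local martingale is a true $\mathbb{Q}$-martingale. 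The valuation formula for $C$ is then the standard price under the pricing measure $\mathbb{Q}$.

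The hard part will be the first step: proving that $Z$ is a genuine martingale and not merely a local one. Once the linear structure of the SDE for $\textbf{X}$ and the affine dependence of $\theta$ on the state are in place, this reduces to checking the hypotheses of \cite[Theorem 5.1]{Klebaner}; a direct Novikov verification would be delicate, because $\int_0^T\theta^2(s,\textbf{X})\,\ud s$ is a quadratic functional of a Gaussian process and $\E[\exp(\tfrac12\int_0^T\theta^2(s,\textbf{X})\,\ud s)]$ need not be finite for a large horizon $T$, whereas the Bene\v{s}/Klebaner route is insensitive to $T$. The remaining verifications --- the explicit drift computation giving \eqref{SDEforSunderQ} and the preservation of the Gaussian/OU structure under $\mathbb{Q}$ --- are routine.
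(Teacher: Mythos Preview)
Your argument is correct, but it follows a different route from the paper's own proof. The paper does not re-derive the martingale property of $Z$ from scratch; it simply invokes \cite[Proposition 5.1]{Benth}, observing that the present setting is the extension of that result to a non-zero mean level $\boldsymbol{\xi}$, with the identification $\theta_0=\mu+\tfrac12\vartheta^2-r+\textbf{b}^{\top}\boldsymbol{\xi}-\textbf{c}^{\top}\boldsymbol{\theta}$, $\theta'=\textbf{b}^{\top}A-\textbf{c}^{\top}C$, $\sigma=\vartheta$. The argument in \cite{Benth} proceeds via the spectral representation of $\e^{At}$ (this is where the eigenvalue hypothesis enters in an essential way), and the only new ingredient needed here is to run the same spectral estimate for the additional function $u\mapsto\theta'\e^{Au}\boldsymbol{\xi}$ produced by the level term.

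Your approach instead stays within the paper's own machinery: you recycle the extended Bene\v{s}/Klebaner criterion from Proposition~\ref{martingale-measure-jump}, noting that $\ud\textbf{X}=(\boldsymbol{\xi}+A\textbf{X})\,\ud t+\vartheta\textbf{e}_p\,\ud W$ has linear-growth coefficients and that $\theta$ is affine in $\textbf{X}$, so \cite[Theorem 5.1]{Klebaner} applies directly. This is more self-contained and in fact does not need the eigenvalue hypothesis for the martingale property of $Z$ on $[0,T]$ (your appeal to it for uniform moment bounds is superfluous at that step, since $\|\e^{At}\|$ is trivially bounded on a compact interval); the hypothesis is only genuinely used in the spectral route of \cite{Benth} and later when analysing the $\mathbb{Q}$-dynamics of $\textbf{X}$. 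What the paper's shortcut buys is brevity by deferring to an external reference; what yours buys is independence from that reference and a uniform treatment with Section~\ref{change}.
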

\begin{proof}
The result follows immediately by extending the result of \cite[Proposition 5.1]{Benth}\footnote{The proof of this extension requires a reasoning using the spectral representation applied to the function $g_{\xi}(u)=\theta'\exp(Au)\boldsymbol{\xi}$ similarly as applied to the function $g(u)=\theta'\exp(Au) \textbf{e}_p$ in the proof of \cite[Proposition 5.1]{Benth}.} to the case of a non-zero mean level $\boldsymbol{\xi}$ where the measure change as described in this proposition takes the form of  
\[
\theta_0(s)=\mu+\frac12\vartheta^2-r+\textbf{b}^{\top}\boldsymbol{\xi}-\textbf{c}^{\top}\boldsymbol{\theta}\,,
\quad \theta'=\textbf{b}^{\top}A-\textbf{c}^{\top}C\,,     \quad \sigma(s)=\vartheta \,.
\]
\end{proof}
Note that for the measure change to be structure preserving the components $\theta_i$ in the vector $\theta'$ should satisfy
\[
\theta_i<\alpha_i, \; i=1,\ldots, p,
\]
as it is assumed in \cite[Proposition 5.1]{Benth}.


By definition,
$\textbf{X}(t)$ is a $p$-variate Ornstein-Uhlenbeck process, and thus $\textbf{X}$ is a $p$-variate
Gaussian process with mean and variance (under $\mathbb{P}$) being bounded as long as $A$ has eigenvalues with negative real part.

Let us next analyse the $\mathbb{Q}$-dynamics of $\textbf{X}$ for the choice of $\theta$ given in \eqref{theta-carma}.
We find,
\begin{align*}
\ud\textbf{X}(t)&=\left(\boldsymbol{\xi}-\textbf{e}_p(\mu + \frac12\vartheta^2-r)-\textbf{e}_p
(\textbf{b}^{\top}\boldsymbol{\xi}-\textbf{c}^{\top}\boldsymbol{\theta})\right)\,\ud t \\
&\qquad+
\left(A\textbf{X}(t)-\textbf{e}_p(\textbf{b}^{\top}A-\textbf{c}^{\top}C)\textbf{X}(t)\right)\,\ud t
+\vartheta\textbf{e}_p\, \ud W_{\mathbb{Q}}(t)\,.
\end{align*}
Observe that $\textbf{e}_p(\textbf{b}^{\top}A\textbf{X}(t))=(\textbf{e}_p\textbf{b}^{\top})A\textbf{X}(t)$.
Hence, the second drift term in the $\mathbb{Q}$-dynamics of $\textbf{X}(t)$ above is therefore equal to
$(A-(\textbf{e}_p\textbf{b}^{\top})A+(\textbf{e}_p\textbf{c}^{\top})C)\textbf{X}(t)$. But
$$
\textbf{e}_p\textbf{b}^{\top}=\begin{bmatrix}
0 & 0 & 0 & \cdots & 0\\
0 & 0 & 0 & \cdots & 0\\
\vdots & \vdots & \vdots& \ddots & \vdots \\
0 & 0 & 0 & \cdots & 0\\
b_0 & b_1 & b_{2} & \cdots & b_{p-1}
\end{bmatrix}\,.
$$
where we recall $b_{p-1}=1$. It follows,
$$
(\textbf{e}_p\textbf{b}^{\top})A=\begin{bmatrix}
0 & 0 & 0 & \cdots & 0\\
0 & 0 & 0 & \cdots & 0\\
\vdots & \vdots & \vdots& \ddots & \vdots \\
0 & 0 & 0 & \cdots & 0\\
-\alpha_p & b_0-\alpha_{p-1} & b_1-\alpha_{p-2} & \cdots & b_{p-2}-\alpha_1
\end{bmatrix}\,.
$$
Similarly, for $\tilde{q}=p-1$, 
$$
(\textbf{e}_p\textbf{c}^{\top})C=\begin{bmatrix}
0 & 0 & 0 & \cdots & 0\\
0 & 0 & 0 & \cdots & 0\\
\vdots & \vdots & \vdots& \ddots & \vdots \\
0 & 0 & 0 & \cdots & 0\\
-c_{p-1}\beta_p & c_0-c_{p-1}\beta_{p-1} & c_1-c_{p-1}\beta_{p-2} & \cdots & c_{p-2}-c_{p-1}\beta_1
\end{bmatrix}\,.
$$
 We conclude that
\begin{align*}
&A-(\textbf{e}_p\textbf{b}^{\top})A+(\textbf{e}_p\textbf{c}^{\top})C\\
&\quad =\begin{bmatrix}
0 & 1 & 0 & \cdots & 0\\
0 & 0 & 1 & \cdots & 0\\
\vdots & \vdots & \vdots& \ddots & \vdots \\
0 & 0 & 0 & \cdots & 1\\
-c_{p-1}\beta_p & -b_0+c_0-c_{p-1}\beta_{p-1} & -b_1+c_1-c_{p-1}\beta_{p-2} & \cdots & -b_{p-2}+c_{p-2}-c_{p-1}\beta_1
\end{bmatrix}\,.
\end{align*}
Choosing $\textbf{c}=\textbf{b}$ yields that $A-(\textbf{e}_p\textbf{b}^{\top})A+(\textbf{e}_p\textbf{c}^{\top})C=C$. Therefore,
\begin{equation}\label{XunderQ}
\ud \textbf{X}(t)=(\widetilde{\boldsymbol{\xi}}+C\textbf{X}(t))\,\ud t+\vartheta\textbf{e}_p\, \ud W_\mathbb{Q}(t)\,,
\end{equation}
where
\begin{equation}
\widetilde{\boldsymbol{\xi}}=\boldsymbol{\xi}-\textbf{e}_p(\mu+\frac12\vartheta^2-r)+
\textbf{e}_p\textbf{b}^{\top}(\boldsymbol{\theta}-\boldsymbol{\xi})\,.
\end{equation}
Observe that $\widetilde{\boldsymbol{\xi}}$ only changes in the last coordinate compared to $\boldsymbol{\xi}$. In conclusion, we have that $\textbf{X}$ has $A$ as ``speed of mean-reversion'' under
$\mathbb{P}$, while $C$ under $\mathbb{Q}$. 

\begin{proposition}
	Let $S$ be as in \eqref{S-CARMA} with $\textbf{X}$ defined by the SDE \eqref{eq20} and let $q=p-1$. Assume that the eigenvalues of the matrix $A$ all have negative real part, the risk-free interest rate $r$ is constant and $\rho$ is defined by \eqref{convenience-spec-carma} with $\textbf{c}=\textbf{b}$. Then the forward price \eqref{forward} of the forward contract written on $S$ is given by
	\begin{equation}\label{FPCARMA}
	\begin{aligned}
	F(t,T)& = \exp\left(\mu T+\mathbf{b}^{\top}\exp(C(T-t))\mathbf X(t)+\mathbf{b}^{\top}\left(\exp(C(T-t))-I\right)C^{-1}\tilde{\boldsymbol{\xi}}
	\right) \\
	&\qquad\times\exp\left(\frac{\vartheta^2}{2}\int_0^{T-t}( \mathbf b^{\top}\exp(Cs)\mathbf e_p)^2\ud s\right)
		\end{aligned}
	\end{equation}
for $t\leq T$ and with $C$ the matrix \eqref{Cmatrix} and $I$ the $p\times p$ identity matrix.
\end{proposition}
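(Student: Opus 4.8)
The plan is to exploit the fact that, with a constant interest rate $r$, the denominator in \eqref{forward} equals $\e^{-r(T-t)}$ and cancels against the discount factor in the numerator, so that $F(t,T) = \E^{\mathbb{Q}}[S(T)\mid\mathcal{F}_t]$. Writing $S(T) = \e^{\mu T}\e^{\mathbf{b}^{\top}\mathbf{X}(T)}$, the task reduces to computing the conditional moment generating function of $\mathbf{b}^{\top}\mathbf{X}(T)$ under $\mathbb{Q}$, where by \eqref{XunderQ} the vector process $\mathbf{X}$ solves the linear (affine, Gaussian) SDE $\ud\mathbf{X}(t) = (\widetilde{\boldsymbol{\xi}} + C\mathbf{X}(t))\,\ud t + \vartheta\mathbf{e}_p\,\ud W_{\mathbb{Q}}(t)$ with $\widetilde{\boldsymbol{\xi}}$ the constant vector displayed above.

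First I would solve this SDE by variation of constants on $[t,T]$:
$$
\mathbf{X}(T) = \e^{C(T-t)}\mathbf{X}(t) + \int\limits_t^T\e^{C(T-s)}\widetilde{\boldsymbol{\xi}}\,\ud s + \vartheta\int\limits_t^T\e^{C(T-s)}\mathbf{e}_p\,\ud W_{\mathbb{Q}}(s)\,.
$$
Since $\beta_p>0$, the companion matrix $C$ in \eqref{Cmatrix} is invertible, so after the substitution $u = T-s$ the deterministic integral evaluates to $\bigl(\e^{C(T-t)}-I\bigr)C^{-1}\widetilde{\boldsymbol{\xi}}$. Multiplying by $\mathbf{b}^{\top}$ exhibits $\mathbf{b}^{\top}\mathbf{X}(T)$ as the sum of the $\mathcal{F}_t$-measurable term $\mathbf{b}^{\top}\e^{C(T-t)}\mathbf{X}(t) + \mathbf{b}^{\top}(\e^{C(T-t)}-I)C^{-1}\widetilde{\boldsymbol{\xi}}$ and the $\mathcal{F}_t$-independent Wiener--It\^o integral $\vartheta\int_t^T\mathbf{b}^{\top}\e^{C(T-s)}\mathbf{e}_p\,\ud W_{\mathbb{Q}}(s)$.

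Conditionally on $\mathcal{F}_t$, the random variable $\mathbf{b}^{\top}\mathbf{X}(T)$ is therefore Gaussian, with conditional mean the $\mathcal{F}_t$-measurable term above and, by the It\^o isometry, conditional variance $\vartheta^2\int_t^T(\mathbf{b}^{\top}\e^{C(T-s)}\mathbf{e}_p)^2\,\ud s = \vartheta^2\int_0^{T-t}(\mathbf{b}^{\top}\e^{Cu}\mathbf{e}_p)^2\,\ud u$. Applying the Gaussian identity $\E[\e^{N}] = \exp\bigl(\E[N] + \tfrac12\mathrm{Var}(N)\bigr)$ to $N = \mathbf{b}^{\top}\mathbf{X}(T)$ conditionally on $\mathcal{F}_t$, multiplying by $\e^{\mu T}$, and collecting the terms yields exactly \eqref{FPCARMA}. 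Equivalently, one may invoke Theorem \ref{Affine-process-charac} with $R\equiv r$, $X=\mathbf{X}$ and $u=\mathbf{b}$: since $\mathbf{X}$ carries no jumps the Riccati system degenerates to $\psi\equiv\mathbf{b}$ constant together with an explicit quadrature for $\phi$, producing the same closed form. The only points requiring a little care are the invertibility of $C$ and the evaluation of $\int_t^T\e^{C(T-s)}\,\ud s$; there is no genuine analytic obstacle here. In particular, the hypotheses on the eigenvalues of $A$ play no role in this computation beyond having already secured the existence of $\mathbb{Q}$, all integrals being over the finite horizon $[0,T]$ and hence finite.
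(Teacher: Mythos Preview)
Your proof is correct and follows essentially the same route as the paper: solve the linear SDE \eqref{XunderQ} by variation of constants, split $\mathbf{b}^{\top}\mathbf{X}(T)$ into its $\mathcal{F}_t$-measurable part and an independent Gaussian Wiener integral, and evaluate the resulting lognormal conditional expectation via the It\^o isometry. Your additional remarks on the invertibility of $C$ and the alternative via Theorem~\ref{Affine-process-charac} are accurate and go slightly beyond what the paper spells out.
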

\begin{proof}
Recalling that $S(T)$ equals $\exp (\mu T+ \textbf{b}^{\top}\textbf{X}(T))$ according to  \eqref{SofY}, with 
	\[
	\mathbf X(T)=\exp(C(T-t))\mathbf X(t)+\left(\exp(C(T-t))-I\right)C^{-1}\tilde{\boldsymbol{\xi}}+\vartheta\int\limits_t^T\exp(C(T-s))\mathbf{e}_p\,\ud W_\mathbb{Q}(s)
	\]
	as solution to \eqref{XunderQ}, the forward price can be expressed as
	\begin{align*}
		F(t,T) &=\mathbb{E}^{\mathbb{Q}}[S(T) \mid \mathcal{F}_t]\\
		& =\exp\left(\mu T+\mathbf{b}^{\top}\exp(C(T-t))\mathbf{X}(t)+\mathbf{b}^{\top}\left(\exp(C(T-t))-I\right)C^{-1}\tilde{\boldsymbol{\xi}}	\right) \\
			&\qquad\times \mathbb{E}^{\mathbb{Q}}\Big[\exp\Big(\vartheta\int\limits_t^T\mathbf{b}^{\top}\exp(C(T-s))\mathbf{e}_p\,\ud W_\mathbb{Q}(s)\Big)\Big].
	\end{align*}
	Evaluating the mean of a lognormal random variable combined with the isometry property provides the stated result.
\end{proof}

Note that solving the SDE \eqref{SDEforSunderQ} for $S$ leads to 
\[
F(t,T)=\mathbb{E}^{\mathbb{Q}}[S(T) \mid \mathcal{F}_t]= \mbox{e}^{r(T-t)}S(t)\mathbb{E}^{\mathbb{Q}}\Big[\exp \Big(\int\limits_t^T\rho(s) \mbox{d}s +\vartheta \big(W_\mathbb{Q}(T)-W_\mathbb{Q}(t)\big) \Big) \mid \mathcal{F}_t\Big],
\]
where $\rho(s)=\textbf{b}^{\top}(\boldsymbol{\theta}+C\textbf{X}(s))$ depends on the process $\textbf{X}$ that satisfies the SDE \eqref{XunderQ}. This approach however will not lead easily to the forward price \eqref{FPCARMA}.

\paragraph{Acknowledgments} The authors acknowledge the Centre of Advanced Study (CAS) at the Norwegian
Royal Academy of Science and Letters (Program SEFE) for providing occasions of research discussions
to start this paper.  Mich\`ele Vanmaele acknowledges the Research Foundation Flanders (FWO) and the
Special Research Fund (BOF) of the Ghent University for providing the possibility to go on sabbatical
leave to CAS. Fred Espen Benth is grateful for financial support from the research project ``FINEWSTOCH'',
funded by the Norwegian Research Council. We thank Fabian Andsem Harang for some discussions.




\bibliographystyle{plain}
\bibliography{biblioversieAP}
\end{document}